\renewcommand{\vec}[1]{\mathbf{#1}}
\newtheorem{proposition}{{\bf \sc Proposition}}
\newtheorem{lemma}[proposition]{{\bf \sc Lemma}}
\newtheorem{corollary}[proposition]{{\bf \sc Corollary}}
\newtheorem{definition}{{\bf \sc Definition}}
\newtheorem{assumption}{{\bf \sc Assumption}}
\theoremstyle{remark} \newtheorem{example}{{\bf \sc Example}}
\def\eproof{\qed}
\newenvironment{new_am}
{\begin{color}{black} \ignorespaces} 
{\end{color}}
\newenvironment{rev}
{\begin{color}{black} \ignorespaces} 
{\end{color}}
\begin{document}

\title{Efficiency and Stability in a Process of Teams Formation\thanks{We thank the following for helpful comments: Andrea Galeotti, Sanjev Goyal, Matthew Jackson, Shachar Kariv, Brian Rogers, Fernando Vega Redondo, Simon Weidenholzer, Leeat Yariv and seminar participants at the ASSET 2012 Meeting in Cyprus, Berkeley, Caltech, Essex University, European University Institute, LUISS University in Rome, University of Siena and Stanford University. We also thank Matteo Chinazzi for his help with the data.
We acknowledge funding from the Italian Ministry of Education Progetti di Rilevante Interesse Nazionale"
(PRIN) grants 2015592CTH and 2017ELHNNJ.}}

\author[1]{Leonardo Boncinelli}
\author[2]{Alessio Muscillo}
\author[2,3]{Paolo Pin}

\affil[1]{\small Dipartimento di Scienze per l'Economia e l'Impresa, Universit\`a di Firenze, Italy
}

\affil[2]{Dipartimento di Economia Politica e Statistica, Universit\`a di Siena, Italy 
}

\affil[3]{BIDSA, Universit\`a Bocconi, Italy 
}

\date{\today
}
\maketitle

\begin{abstract}
\noindent
Motivated by data on coauthorships in scientific publications, we analyze a \emph{team formation process} that generalizes matching models and network formation models, allowing for overlapping teams of heterogeneous size. We apply different notions of stability: \emph{myopic team-wise stability}, which extends to our setup the concept of pair-wise stability, \emph{coalitional stability}, where agents are perfectly rational and able to coordinate, and \emph{stochastic stability}, where agents are myopic and errors occur with vanishing probability. We find that, in many cases, coalitional stability in no way refines myopic team-wise stability, while stochastically stable states are feasible states that maximize the overall number of activities performed by teams.
\end{abstract}


\noindent {\bf JEL classification codes:} C72, C73, D85, I23.

\bigskip

\noindent {\bf Keywords:} team formation; stochastic stability; coalitional stability; myopic team-wise stability; networks; coauthorship.


\section{Introduction}\label{section:intro}

\begin{new_am}
In this paper we propose a theoretical analysis of the process of team formation to perform tasks with the aim to shed light on a broad variety of real-world activities. 
Our model generalizes matching models and network formation models.

We consider a finite set of agents, with heterogeneous constraints on time, who have the possibility of choosing between a finite set of \emph{teams} and allocate their time to perform tasks. 
Teams of different size are allowed and can be formed only if they satisfy exogenous constraints, called \emph{technology}.
For instance, agents may form a team only if they are neighbors in an exogenously given social network, or if they match complementary exogenous skills, or if they have common communication tools.
If time constraints are satisfied for each agent, a configuration of teams is \emph{feasible} and called a \emph{state}. 
Each state provides a specific payoff to each agent.
As we discuss in Section \ref{subsec:generalization}, this setup generalizes matching models and network formation models.
\end{new_am}

We find that, under the assumption of non-satiation with respect to teams for every agent, an extension of the simple notion of \emph{pair-wise stability} (\citealp{JacksonWolinsky96}) to this setup -- which we call \emph{myopic team-wise stability} -- does not have a strong predictive power on the stable states of the model; in particular, feasible states that are maximal with respect to set inclusion are myopically team-wise stable.
Therefore, we compare two possible refinements.
The first, \begin{rev}following the same approach of \cite{JacksonWatts02}\end{rev}, is given by \emph{stochastic stability}: in the presence of extremely rare errors that can create or dissolve teams, and with agents that adapt myopically to the current state of the system, we find that the stochastically stable states are those that actually maximize the overall number of teams.
The second is a generalization of \emph{coalitional stability} for cooperative games (\citealp{Gillies59}): its predictive capability will turn out to be heavily dependent on the assumptions on payoff functions.
Moreover, when all projects are equivalent for every agent in terms of costs (resources employed) and benefits (payoffs earned), we find that the states that satisfy this form of coalitional stability are exactly the same as those that are myopically team-wise stable. 
Therefore, this latter refinement -- which is much more demanding in terms of agents' rationality -- proves to have no greater predictive power with respect to myopic team-wise stability in a stark but significant case.

The paper is structured as follows.
In Section \ref{subsec:literature} we consider its relation to the extant literature. 
In Section \ref{subsec:empirical_motivation} we motivate the contribution by considering an empirical analysis to which our model can be applied.
In Section \ref{section:model} we present all the aspects of the model, without any definition of stability.
In Section \ref{section:mts} we introduce and discuss the weak notion of myopic team-wise stability, which is then refined with the tools of stochastic stability in Section \ref{section:ss}, and with a concept of coalitional stability in Section \ref{section:cs}.
Section \ref{section:conclusion} lists possible extensions of our study, and some additional discussion and results are in the Appendices.

\section{Literature} \label{subsec:literature}

In the real world, activities are often performed by people in \emph{teams}, so that the constraints each agent has to take into consideration in her decision depend on the choices, and hence the constraints, of others. 
For instance, if Alice wants to allocate a couple of hours on Saturday morning to playing tennis, but all her friends have already fully allocated time on Saturday morning to other activities, then Alice's desire to play tennis will remain unsatisfied. This simple example shows the existence of indirect externalities that must be taken into account in every individual decision when activities are performed in teams.
\begin{new_am}
The same happens for team formation and co-authorships in academic research. People work simultaneously on different projects, often participating to different teams which, in turn, may also work simultaneously on several projects. 
Each researcher allocates her time for each team and each project and, hence, every team that is formed may generate negative (or positive) direct externalities for non-members, due for instance to the reduction of effort that a researcher puts into each single project when she undertakes a new project (as further described in Section \ref{subsec:empirical_motivation} and modeled in Section \ref{subsec:publishing}).
Since size and composition of teams are important drivers of performance, this team formation process is not only studied in the literature about the academic profession (e.g., by \citealp{milojevic2014principles}) but also in the literature on R\&D and entrepreneurial activity related to co-foundation of firms \citep{BreschiLissoni01,StuartSorenson08,shah2019jewels}.
\end{new_am}

\begin{rev}
This paper starts with the pair-wise stability notion defined by \cite{JacksonWolinsky96} and extends it to team-wise stability, whereby tasks can be performed by groups of more than 2 individuals and of different sizes. 
To refine this myopic equilibrium concept with stochastic stability we use the same approach of \cite{JacksonWatts02} while extending their model, which is a special case of the one presented here.
Moreover, we contribute to this stream of literature by incorporating constraints to the capacity of agents of performing tasks and of coordinating with others in the spirit of  \cite{staudigl2014constrained} and \cite{baumann2021model}, although both works consider a pure non-cooperative framework in contrast with our approach which is more cooperative.
\end{rev}

As far as matching models are concerned, the most recent papers that study multi-matching environments \begin{new_am}with more specific results\end{new_am} are \citet{Pycia12} and \citet{pycia2019matching}, \begin{new_am}concentrating on the presence of externalities\end{new_am}, and \citet{hatfield2014many}, \begin{new_am}with the focus on the effects of agents' specific preferences\end{new_am}. 
With respect to network formation models, we generalize pair-wise stability (see \citealp{JacksonWolinsky96}) and strong stability (\citealp{jackson2005strongly}) to a more general setting of resource-constrained team formation. 
The constraint imposed on our agents by a fixed time resource has been modeled in network formation models, e.g., by \citet{staudigl2014constrained}.
On the other hand, the constraints imposed by the technology can be related to many aspects introduced in the network formation literature: 
constraints may be due to homophily (see \citealp{CJP09}), because only similar agents may be able to form a team together;
or they may be related to an exogenous network of opportunities, because only linked agents have the opportunity to match (on this, we are aware only of \citealp{FMP10});
or they may be imposed by complementary exogenous skills that need to be matched together (see, e.g., \citealp{currarini2016simple}).

Stochastic stability (for which the references are discussed in Section \ref{section:ss}) has been applied to networks (first by \citealp{JacksonWatts02}). More recently \citet{Klaus20102218} used stochastic stability as a predictive tool for roommate markets.
In \citet{BonPin12}, best shot games played in exogenous networks are analyzed, and stochastically stable states are proven to be the states with the maximum number of contributing agents if the error structure is such that contributing agents are much more likely to be hit by perturbations.

The stochastic stability analyses carried on in this paper generalizes the results in \citet{BonPin18}, where an application to marriage markets is considered, distinguishing between the case of a link-error model, where mistakes directly hit links, and the case of an agent-error model, where mistakes hit agents' decisions and only indirectly links. In this setting, stochastic stability proves ineffective for refinement purposes in the link-error model -- where all maximal matchings are stochastically stable -- while it proves effective in the agent-error model -- where all and only  matchings \begin{rev}that maximize the number of pairs formed\end{rev} are stochastically stable.

Our concept of coalitional stability stems from concepts of cooperative game theory, and particularly from the literature on coalition formation (see, e.g., \citealp{konishi2003coalition}, \citealp{gomes2005dynamic}, \citealp{hyndman2007coalition}, \citealp{chalkiadakis2010cooperative}, \citealp{sawa2014coalitional}, \citealp{mauleon2018constitutions} and \citealp{sawa2019stochastic}) and clubs (see, e.g., \citealp{Pauly70} and \citealp{faias2017endogenous}).
We provide further references and undertake an in-depth discussion on this in Section \ref{section:cs}.

\begin{new_am}
\section{Empirical motivation}
\label{subsec:empirical_motivation}

We consider the \emph{American Physical Society} dataset (APS), which comprises publications spanning several decades in virtually all fields of physics and also contains information about their references and their authors.\footnote{The dataset consists of around 300,000 authors and 570,000 papers published from the beginning of the XX century until 2015. To avoid the scarcity of data before the 1950s, here we only focus on the most consistent part of this dataset, thus limiting our attention to papers published from 1960 and, consequently, to authors whose career started after 1960 as well.} 
This allows building a co-authorship network where a link between two authors is established if and only if they are both authors of a paper and, in addition, it allows the computation of the citations received by every paper (and, hence, by each author).

To analyze the data at an individual level, we focus on authors that have a lengthy and consistent career of at least 25 years, thus restricting the sample to around 24,000 authors.\footnote{The career length of an author is determined by the years passed from her first to her last publication (recorded in APS). Her cohort is the first year of publication. The subsample selected consists of authors whose cohort is from 1960 to 1990 and whose median career length is around 32 years. These authors are consistently present over time in the dataset, since the median author has a publication recorded every 2 years. Additional information in \ref{app:data}.}
This analysis shows that, on the one hand, throughout one's career every researcher tends to participate to a stable - or slightly increasing - number of projects together with an increasing number of collaborators (Figure \ref{fig:1} top and bottom-left).
This suggests that researchers behave as if they always gain by entering new projects, even if they are already participating to many of them (this is in line with the assumptions of our model and, specifically, with Assumption \ref{ass:v0} introduced in Subsection \ref{subsec:assumptions}).

On the other hand, \begin{rev}as shown in Figure \ref{fig:1} bottom-right (and Figure \ref{fig:2} in \ref{app:data}),\end{rev} the trends of papers produced and citations received -- when normalized for the number of authors
-- seem to suggest that researchers tend to dedicate less energy and time to these projects and also their quality -- as proxied by citations -- decreases.
This is also captured by our model in terms of negative externalities: when an agent takes part in a project, this imposes a negative externality on agents that are her collaborators in other projects. 
In the application outlined in Section \ref{subsec:publishing}, researchers face the tension between the benefits accrued when authoring multiple projects with multiple coauthors and the communication and coordination costs that increase with team size.

\begin{figure}[ht]
    \centering
     \includegraphics[width=.75\textwidth]{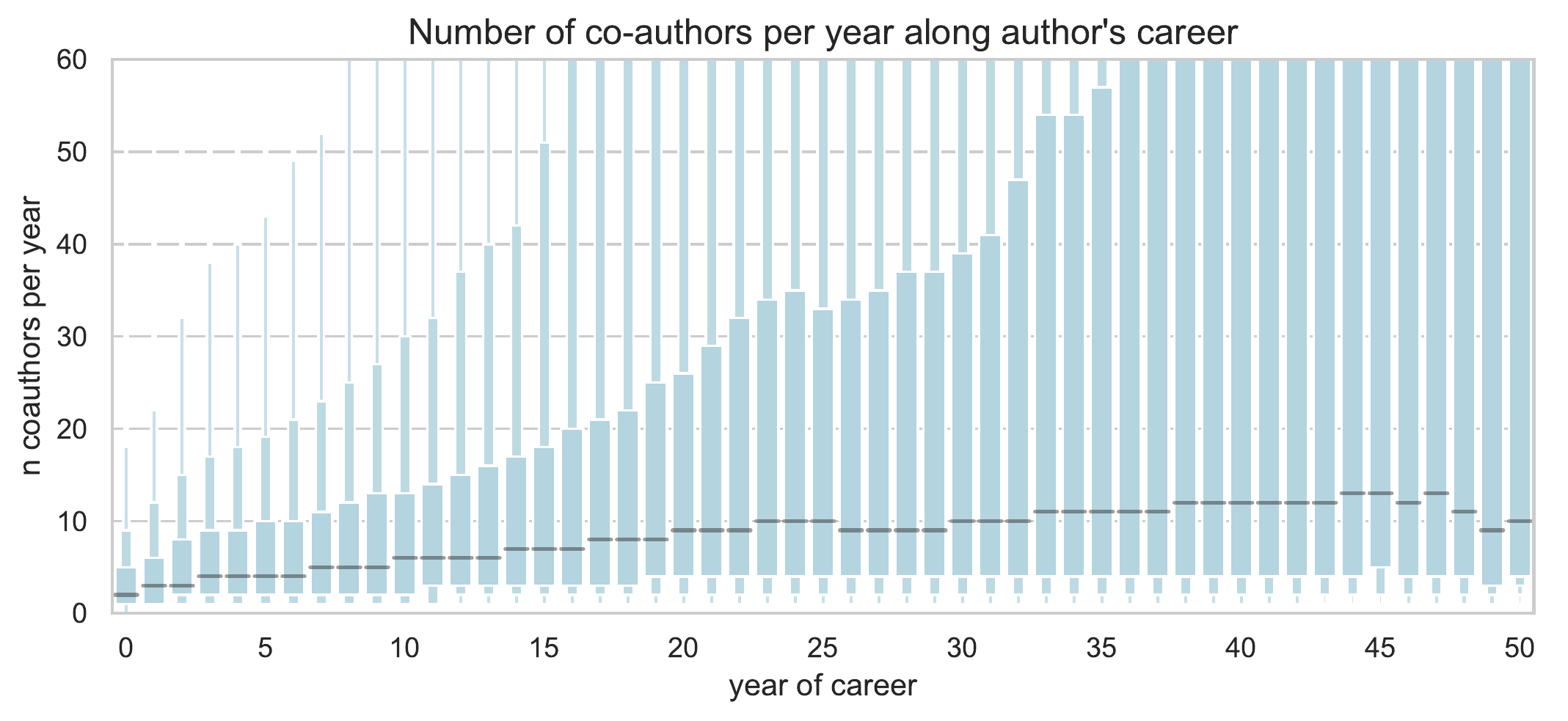}
	\begin{subfigure}[t]{.49\textwidth}
		\includegraphics[width=\textwidth]{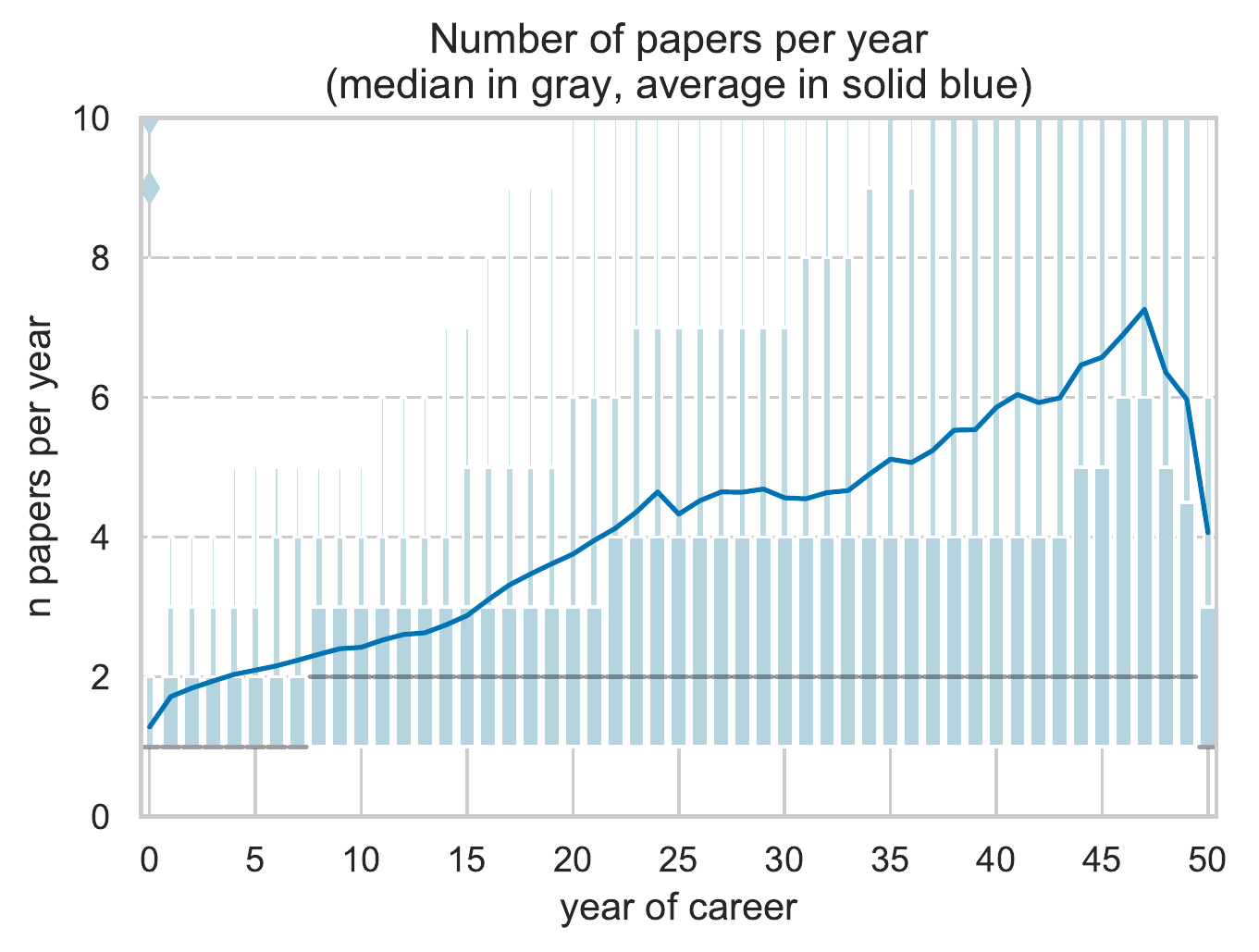}
	\end{subfigure}
	\begin{subfigure}[t]{.49\textwidth}
		\includegraphics[width=\textwidth]{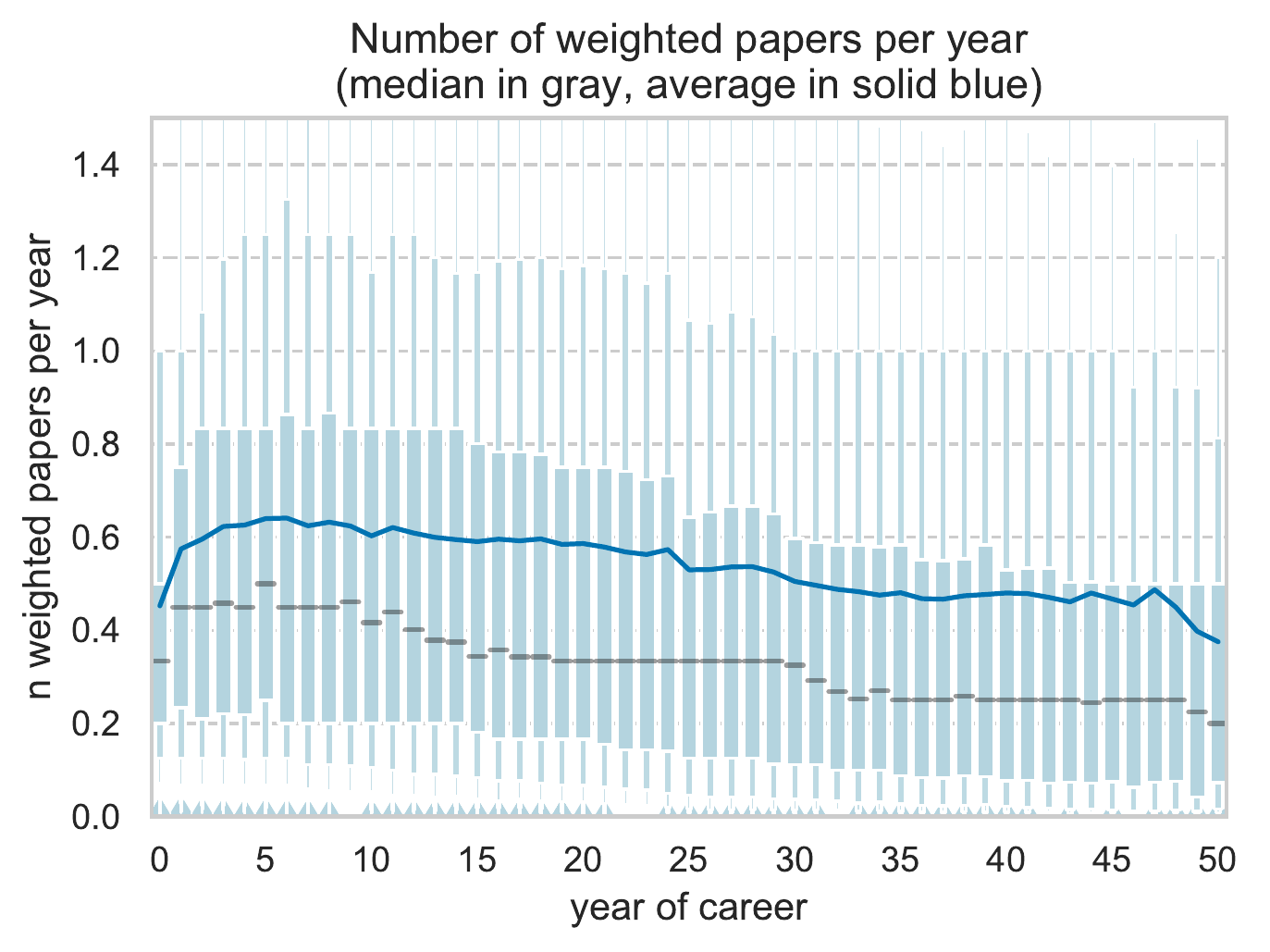}
	\end{subfigure}
	\caption{Trends of individuals' career. Boxplots (a.k.a. letter-value plots, see \cite{hofmann2017value}) where the median is the centerline and the quartiles Q2 and Q3 are the rectangles below and above it, respectively. Each successive level outward contains 50\% of the remaining data. The average is denoted by a solid blue line. A weighted paper is counted as $\frac{1}{\text{n authors}}$.}
	\label{fig:1}
\end{figure}


The analysis at an aggregate level seems also to confirm that inefficiencies may be caused by congestion and overproduction of papers.
The aggregate production of papers has significantly increased, together with the sheer number of authors participating to the profession (Figure \ref{fig:3} \begin{rev}in \ref{app:data}\end{rev}). 
However, \begin{rev}Figure \ref{fig:4} shows that\end{rev} the fraction of papers without citations has increased even though the median number of citations received by papers has remained stable over time and the number of papers that got cited -- at least once -- has increased at the same rate as the number of citations produced overall. 
Our model offers a possible explanation for how these types of aggregate-level inefficiencies may arise as a consequence of individual-level behavior and non-internalized externalities (Section \ref{subsec:publishing}).



\begin{figure}[ht]
    \centering
     \includegraphics[width=.75\textwidth]{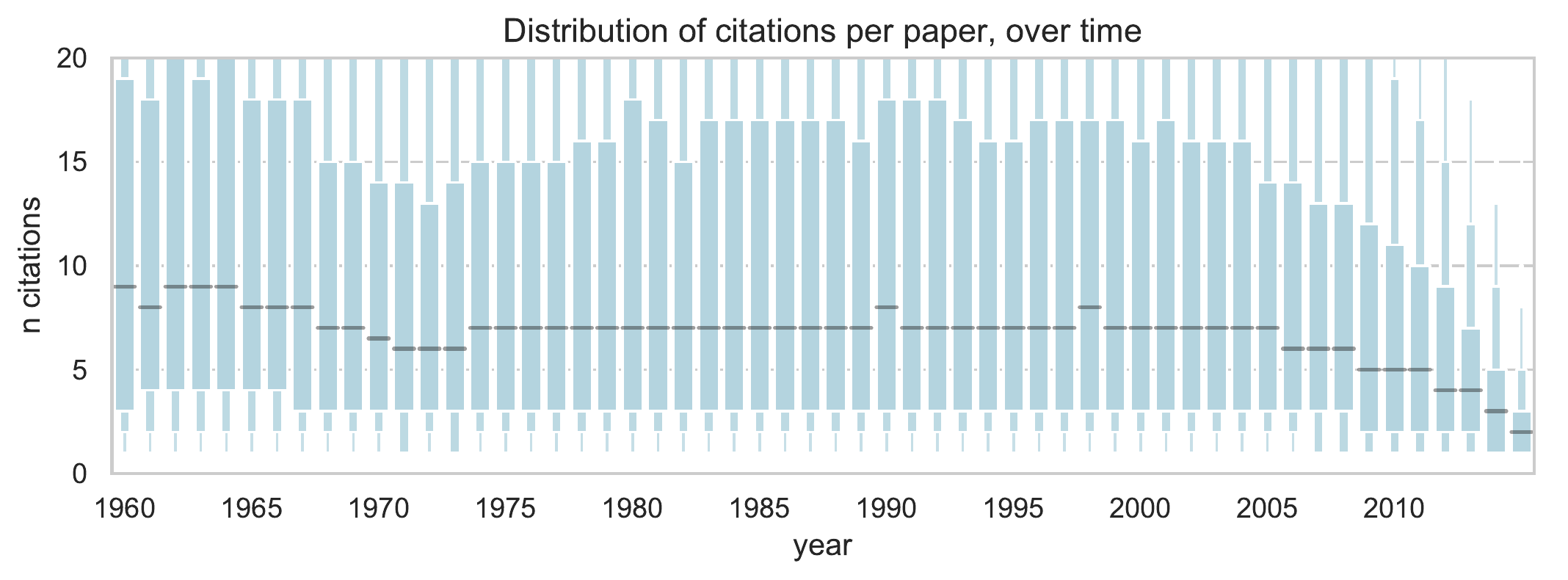}
	\begin{subfigure}[t]{.49\textwidth}
		\includegraphics[width=\textwidth]{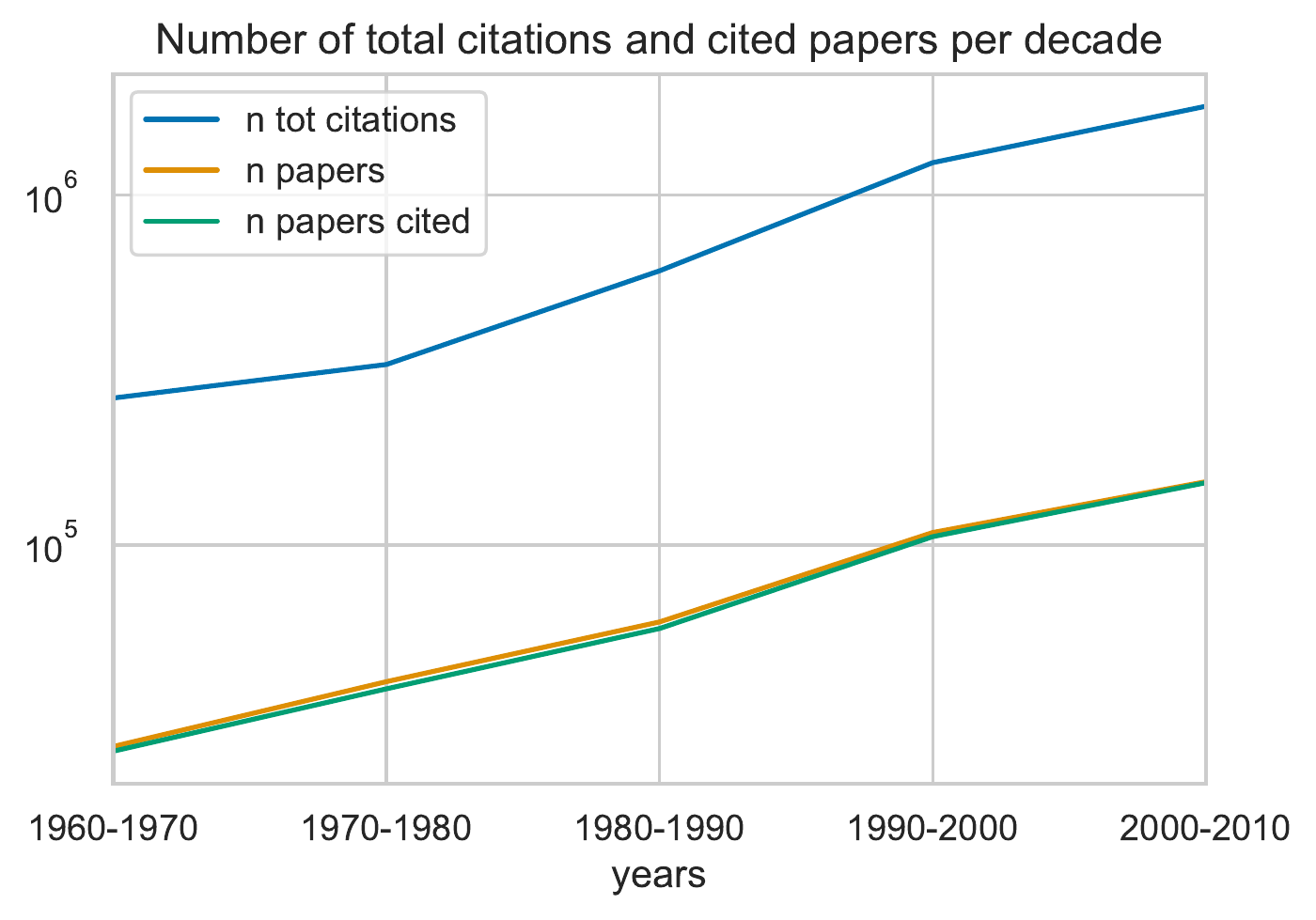}
	\end{subfigure}
	\begin{subfigure}[t]{.49\textwidth}
		\includegraphics[width=\textwidth]{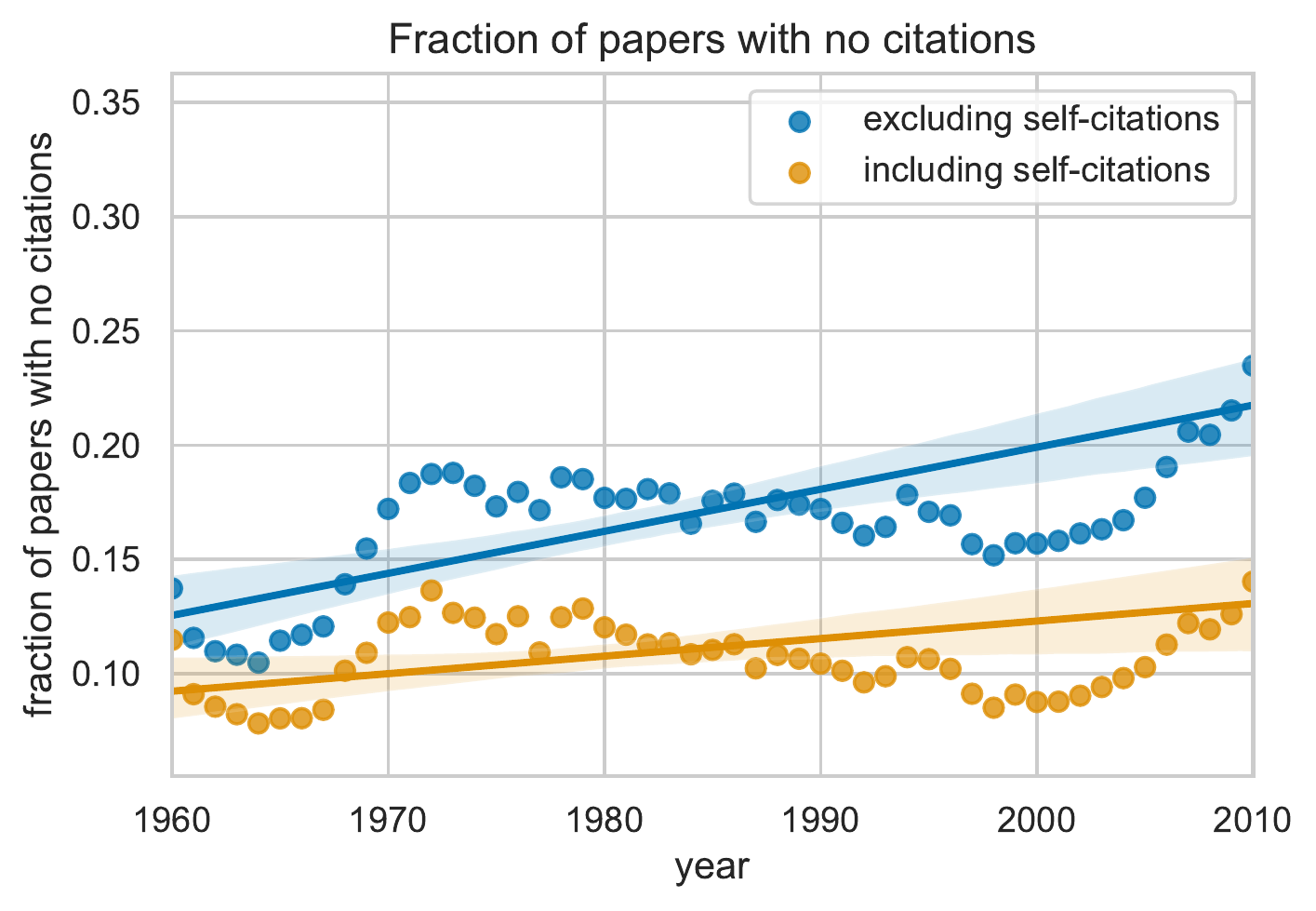}
	\end{subfigure}
	\caption{Trends of citations received by papers over time. The publication year of the cited paper is on the horizontal axis.}
	\label{fig:4}
\end{figure}



\end{new_am}

\clearpage

\section{Model}\label{section:model}

\subsection{The team formation model}\label{subsec:game}

We take into consideration a finite set $N$ of $n$ agents. Each agent $i$ has an endowment $w_i \in \mathbb{N}_+$ of a time resource. We denote by $\vec{w} \in \mathbb{N}_+^n$ the vector of endowments of all agents.\footnote{We denote by $\mathbb{N}$ the set of non-negative integers, and by $\mathbb{N}_+$ the set of positive integers.} A \emph{team} is a vector $\vec{t} \in \mathbb{N}^n$, $\vec{t} \leq \vec{w}$, with $t_i$ indicating the amount of time employed by agent $i$ in a joint task. We denote by $T$ the set of teams. 

Let $A$ be a finite set of activities (or tasks). A project $p = (a,\vec{t})$ is an activity $a \in A$ carried out by a team $\vec{t} \in T$. We use set $P \subseteq A \times T$ to collect all projects $p = (a,\vec{t})$ such that team $\vec{t}$ is able to accomplish activity $a$. We can think of $P$ as representing the \emph{technology}, since it indicates, for every possible task, which combinations of inputs allow the task to be completed.\footnote{%
\label{note:Roth}
A complementary interpretation of the technology $P$ is based on the agents' preferences.
From this point of view, $P$ allows only for those teams in which no member would rather stay alone than participate in the project.
In the literature on matching this condition is called \emph{individual rationality} and it is also used in decentralized matching models \citep{roth1990random}. We observe, however, that interpreting $P$ as individual rationality asks for a different model when combining mistakes with exit costs (see \ref{app:robust}): in such a case, a project that is formed by mistake persists over time due to the costs for exiting, even if some agent would prefer to stay alone. 
Finally, we point out that, when allowed, both interpretations for the technology can co-exist: a project is technologically feasible if such a team is able to perform the activity and, at the same time, every agent is willing to do so.} 
It will simplify the following exposition to introduce, with a slight abuse of notation, the auxiliary function $n(p)=n(a,\vec{t}) \equiv \{ i \in N: t_i > 0 \}$, which gives us the set of agents that put some positive amount of time (possibly different among agents) into project $p$.
Another notation we will use is $h(p)=h(a,\vec{t}) \equiv \sum_{i=1}^N t_i $, which indicates the total amount of time (e.g., \emph{hours}) employed on aggregate by the agents in project $p$.

In the following discussion we will often use \emph{teams} and \emph{projects} as synonyms, but some clarification is necessary.
A project $p=(a,\vec{t})$ characterizes an activity $a$ performed by a team $\vec{t}$, where $\vec{t}$ specifies not only the members of the team (who are in the set $n(p)$) but also how much time each of them devotes to the project.
\begin{new_am}A collection of projects, i.e., of activities and teams, is called a \emph{state}.\end{new_am}
While every project $p$ can occur only once in a state, because every activity $a$ can be executed only once by the same team, the same team $\vec{t}$ can occur in different projects, if this is allowed by the technology $P$, i.e., if there are at least two projects $(a,\vec{t}),(b,\vec{t}) \in P$, with $a \ne b$.

\begin{rev}A \emph{state} $x \subseteq P$ is a collection of projects where every activity is performed by only one team.\end{rev} We use $\vec{e}(x) = \sum_{ (a,\vec{t})  \in x} \vec{t}$ to indicate the vector collecting the overall amount of resources employed in state $x$, agent by agent. We say that $x$ is \emph{feasible} if $\vec{e}(x) \leq \vec{w}$, and we denote by \begin{rev}$X \subseteq \mathcal P(P)$ the collection of subsets of $P$ containing all feasible states.\end{rev} 
We also introduce function $\ell(x)=|x|$ that simply counts the number of projects that are completed in state $x$. 

Finally, we introduce \emph{utilities} that agents earn depending on the state they are in. For every $i \in N$, and for every $x \in X$, we denote by $u_i(x)$ the utility gained by agent $i$ in state $x$.

Given these elements, it is possible to define a \emph{team formation model} with the quintuple $(N,\vec{w},P,\vec{u})$.
The primitives are the set $N$ of agents involved, their constraints $\vec{w}$, the set $P$ of projects allowed by technology, and agents' utilities $\vec{u}$.
Given $N$, $\vec{w}$ and $P$, it is possible to derive the \begin{rev}set $X$ of all feasible states which is a partially ordered set with respect to set inclusion\end{rev}.\footnote{\begin{rev}The set inclusion is a partial order over the powerset $\mathcal P(P)$ and, hence, on any subset of $\mathcal P(P)$ such as $X$. We recall that a (weak) partial order over a set is a binary relation among the elements of the set that is reflexive, antisymmetric and transitive.\end{rev}}


\subsection{Assumptions} \label{subsec:assumptions}

In deriving our results, we employ the following restrictions on the possible structure of teams (first three) and on utilities (second group of three). We explicitly refer to each of these assumption whenever used. We note that some of them are a refinement of one another, while others are incompatible. 

\renewcommand{\theassumption} {t1}
\begin{assumption}
In every $(a,\vec{t}) \in P$, we have for every $i \in N$ that either $t_i= 0$ or $t_i=1$.
\label{ass:t1}
\end{assumption}

Assumption \ref{ass:t1} states that the time allocated to each feasible project by every agent is always $0$ or $1$, or simply (up to a normalization of time) that the time allocated to each feasible project by its participants is a constant of the model which is homogeneous across projects for every agent. 

In contrast, the next two are assumptions that exogenously fix the number of members in each team. We will discuss them in more detail in Section \ref{subsec:generalization} where we will see how our model is a generalization of other common theoretical setups.

\renewcommand{\theassumption} {s1}
\begin{assumption}
There is a $k \in \mathbb{N}_+$, such that for every $p \in P$, we have that $|n(p)|=k$.
\label{ass:s1}
\end{assumption}
The following Assumption \ref{ass:s2} is a refinement of Assumption \ref{ass:s1}, where $k$ is fixed to be equal to $2$.
\renewcommand{\theassumption} {s2}
\begin{assumption}
For every $p \in P$, we have that $|n(p)|=2$.
\label{ass:s2}
\end{assumption}
We now present some assumptions that specify how agents gain utilities by performing activities in teams.
\renewcommand{\theassumption} {v0}
\begin{assumption}
For every $x, x' \in X$, with $x'\neq x$ and $x' = x \cup \{p\}$, and for every $i \in N$ such that $i \in n(p)$, we have that $u_i (x') > u_i (x)$.
\label{ass:v0}
\end{assumption}
Assumption \ref{ass:v0} is the only one that is needed for our main result. It states that the marginal utility in forming a team, for each of its members, is always positive, independently of all other teams in place. 
We note that this assumption allows for a large variety of externalities that a project may have on the utility of non-members of that team, or on the fact that the same team could bring different marginal effects to its members, depending on the state.
\begin{new_am}
In particular, \begin{rev} this is in line with the assumptions of the model in \cite{baumann2021model}, where the benefit of an agent from participating to a project always increases if she gets involved in it.
Moreover, this assumption is consistent with the behavior of researchers that we observe in the APS dataset discussed in Section \ref{subsec:empirical_motivation}.\end{rev}
\end{new_am}

An additional possible restriction is to impose that the aggregate utility of each project is constant across projects (normalized to $1$).

\renewcommand{\theassumption} {v1}
\begin{assumption}
For each $x \in X$, $\sum_{i \in N} u_i (x) = |x|$.
\label{ass:v1}
\end{assumption}

Finally, we will consider also a more restrictive assumption that asks for linearity in teams membership, so making the marginal value of each team, for each of its members, independent on states.

\renewcommand{\theassumption} {v2}
\begin{assumption}
For each state $x \in X$, and any agent $i \in N$, we have that $u_i (x) = v \cdot |\{p \in x: i \in n(p)\}|$, with $v \in \mathbb{R}^+$.
\label{ass:v2}
\end{assumption}

The last two assumptions convey different ideas on the assignment of utilities: while Assumption \ref{ass:v1} imposes that the aggregate marginal value of each team is $1$, Assumption \ref{ass:v2} says that the payoff earned by each agent $i$ is merely given by the number of projects in which $i$ participates.
We note that the two assumptions are compatible only if Assumption \ref{ass:s1} holds as well, in which case we have $v=\frac{1}{k}$.

\subsection{Maximal states} \label{subsec:maxstates}

We observe that $X$ is a partially ordered set under inclusion.
This is because, for any two states $x$ and $x'$ belonging to $X$, we can have that either $x$ is included in $x'$, or $x'$ is included in $x$, or no set inclusion relationship can be established between them.
However, as the empty state $x_0$ is included in any other state, it is the only \emph{minimal} state (or the \emph{least} state) and, given two  states $x$ and $x'$, the set of those states that are included in both is always non-empty.
On the other hand, as there is a threshold $\vec{w}$ on the overall available resources, there may not always be a common superset for any two states.
In general there will be many \emph{maximal} states, i.e., states above which it is not possible to include other teams, because otherwise the threshold would be exceeded.

We denote by $\mathcal{M}$ the set of maximal states, $\mathcal{M} = \{x \in X : x \subseteq x' \text{ and } x \neq x' \Rightarrow x' \notin X \}$. We denote by $\mathcal{L}$ the set of states with maximum number of completed projects, $\mathcal{L} = \{x \in X : |x| \geq |x'|, \text{ for all } x' \in X\}$. We observe that $\mathcal{L} \subseteq \mathcal{M}$. In fact, if $x \in X$ and $x \notin \mathcal{M}$, then there exists a feasible state that can be obtained from $x$ by adding some project, and $x$ cannot maximize the number of projects. In contrast, there exist in general maximal states that do not maximize the number of projects, as the following example shows.

\begin{example}[Maximal states and maximum number of projects] \label{ex:POset}
Consider the case in which $N=\{i,j,k,m\}$, $\vec{w}=(2,2,2,2)$, $A = \{a,b\}$, and $P = \{ (a,(1,1,0,0)), (b,(1,1,0,0)), (a,(0,1,1,0)),$\\ $(b,(0,1,1,0)),$ $(a,(0,0,1,1)), (b,(0,0,1,1)) \}$.
This is a situation in which there are four agents with two units of time each, there are two activities to be performed, and each activity requires that either $\{i,j\}$, or $\{j,k\}$, or $\{k,m\}$ must be involved, with one unit of time each. We note that Assumptions \ref{ass:t1} and \ref{ass:s2} hold.
Figure \ref{fig:POset} illustrates the partial order on set $X$ resulting from the above assumptions: an arrow from a state $x$ to another state $y$ indicates that we can pass from $x$ to $y$ by adding a single project.\footnote{In order to provide a simplified figure, we have summarized in a single node the states that are the same in any respect apart from the labels of activities.\label{foot1}}
There are three maximal states, but only one of them maximizes the number of projects.
\eproof
\end{example}

\begin{figure}
\begin{center}
\includegraphics[width=1\textwidth]{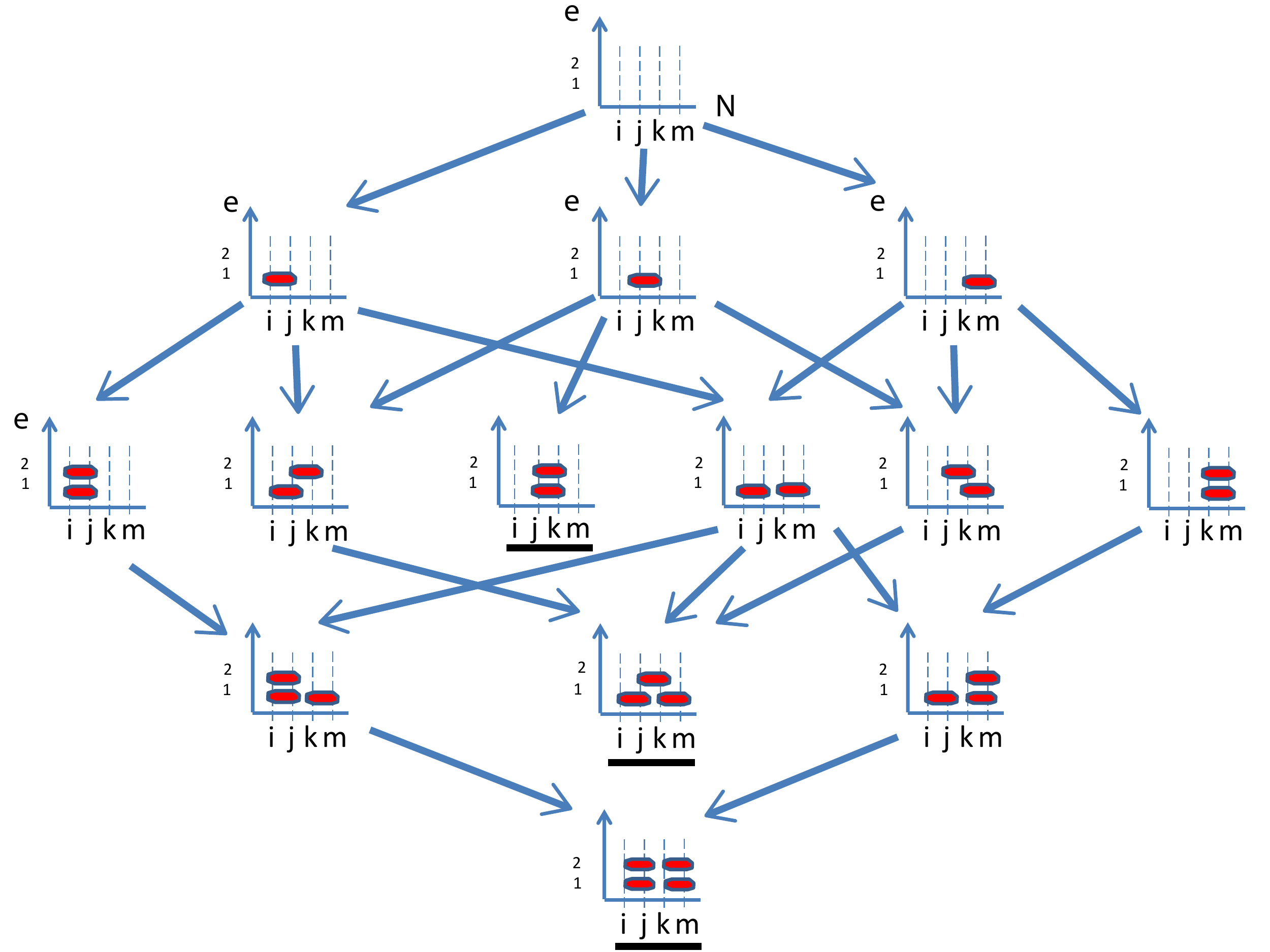}
\caption{%
The partially ordered set $X$ of all feasible states for the team formation model of Example \ref{ex:POset}.
In this graphical representation -- where we have not considered differences in activities, focussing only on their number -- there are three maximal states, which are those underlined in black.}
\label{fig:POset}
\end{center}
\end{figure}

\subsection{Why this generalization?} \label{subsec:generalization}

The theoretical setup we have introduced encompasses different matching models with non-transferable utility that have been developed in the literature. We acknowledge that some of the existing models might be stretched to deal with most of the cases analyzed within our setup. However, we claim that our model is a natural and simple container for all these models, and we find a value in its capability to adapt easily so as to take into consideration specific cases. In the following we will illustrate this capability.


Cooperative games with non-transferable utility are obtained in our setup if we specify that each agent can belong to one coalition only, and that no externalities are allowed. In order to deal with matching, as done in \cite{BonPin18}, we simply need to add Assumption \ref{ass:s2}, so that only teams of size two are allowed to be formed. Marriage -- that is bipartite matching -- can be obtained through adequate constraints on the technology; after dividing the set of agents between males and females, only heterosexual pairs are allowed in $P$, and additional constraints can also be considered. 
Figure \ref{fig:marriage_problem} provides an example.

\bigskip

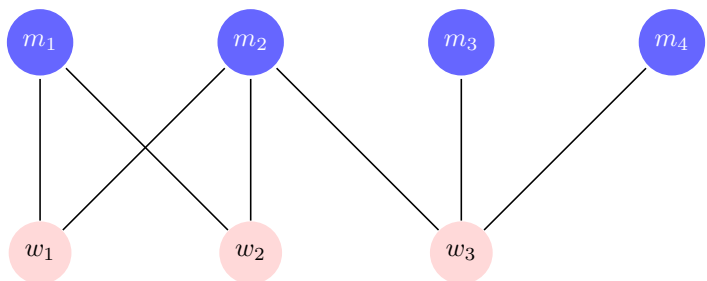
\begin{figure}[h]
\begin{center}
\begin{tikzpicture}[-,>=stealth',shorten >=1pt,auto,node distance=2.8cm,semithick]
  \tikzstyle{F}=[circle,fill=pink!60!white,draw=none,text=black, minimum size=5mm]
  \tikzstyle{M}=[circle,fill=blue!60!white,draw=none,text=white, minimum size=5mm]
  \node[F]            (W1)                      {\footnotesize{$w_1$}};
  \node[F]         (W2)  [right of=W1]          {\footnotesize{$w_2$}};
  \node[F]         (W3)  [right of=W2] 		 {\footnotesize{$w_3$}};
  \node[M]         (M1)  [above of=W1] 		 {\footnotesize{$m_1$}};
  \node[M]		   (M2)  [right of=M1]		 {\footnotesize{$m_2$}};
  \node[M]		   (M3)  [right of=M2]		 {\footnotesize{$m_3$}};
  \node[M]		   (M4)  [right of=M3]		 {\footnotesize{$m_4$}};
  \path (W1)  edge 			   (M1)
		(W1)  edge 			   (M2)
		(W2)  edge 			   (M1)
		(W2)  edge 			   (M2)
		(W3)  edge 			   (M2)
		(W3)  edge 			   (M3)
		(W3)  edge 			   (M4);
\end{tikzpicture}
\end{center}
\caption{A marriage problem. In this example there are three women ($w1$, $w2$ and $w3$) and four men (from $m1$ to $m4$). Feasible pairs are joined by a link. 
\label{fig:marriage_problem}}
\end{figure}

If, instead, we relax the upper bound on $\vec{w}$, still following Assumption \ref{ass:s2}, then any state can be considered as a network that satisfies the constraints imposed by $\vec{w}$ (concerning the maximum degree of nodes), and the connections made available by technology $P$ (representing the exogenous network of opportunities).
This is illustrated in the following example.

\bigskip

\begin{example}[Co-authorships] \label{ex:coauthors}
A popular and seminal model in the economic literature on networks is the ``Co-author model'' of \citet{JacksonWolinsky96} \begin{rev}and later extended by \cite{rego2019co}\end{rev}:\footnote{\begin{rev}With the notation used here, the model by \cite{JacksonWolinsky96} and by \cite{rego2019co} consider activities that are performed only by pairs of agents, as stated in Assumption \ref{ass:s2}.\end{rev}} agents are researchers and links are pair-wise collaborations on scientific projects, which are costly but provide payoffs that depend endogenously on the negative externalities given by each collaboration to the other co-authors of an author.

We can include in our setup a payoff function with costs and negative externalities of a project $p=(a,\vec{t})$ on the members of other teams that are formed by the members of $\vec{t}$, as in the original model\begin{rev}, and with Assumption \ref{ass:s2}\end{rev}.
However, our model allows for more generality and also for more realistic time constraints\footnote{%
It can be reasonable to assume (at least in certain contexts) that everyone likes to be in as many projects as possible, and hence that Assumption \ref{ass:v0} is satisfied. With this interpretation, costs arise indirectly as opportunity costs due to the constraint given by available time.}
that can be imposed on the available (multi-)matchings.
First of all, (i) some agents may work alone, but even three or more agents can set up a team together and produce a paper, as happens in the profession.
Then, (ii) with regard to constraints, there could be an exogenous network $G$ of acquaintances, so that a group of co-authors is possible only if they are mutually connected in $G$.
Or, (iii) the researchers could have exogenous complementary skills, and only projects involving agents with enough diversity could be successful.
Aspects such as the three listed above, and even others, could all be modeled by some technology $P$.
\eproof
\end{example}

So, what is the added value of our setup with respect to existing ones, in terms of representation of real-world phenomena?
To provide an answer through an example, let us stick to the co-authorship model of Example \ref{ex:coauthors}.
Imagine that agents $i$, $j$ and $k$ set up a project $p $ together, so that $|n( p )|=3$. This could be represented in the original co-authorship model of \citet{JacksonWolinsky96}, that allows only for couples, by saying that $i$ is linked to $j$ and $k$, and $j$ and $k$ are also linked together.
We observe that the link between $i$ and $j$ would have a negative externality on each neighbor of these agents, including $k$.
However, in the general setup and in reality, the fact that $i$ and $j$ are in a three-agent collaboration has a positive externality on the third agent $k$, and a negative externality on the others.
Formally, this could be done in the original network formation model by specifying, for each link between two agents, and for any other agent, the sign of the externality of that link on the third agent.
It is clear that this would seriously complicate the notation, and that only a more general framework such as the one we use can overcome such difficulties.

\bigskip

We conclude this section with a stylized but fairly general applications that show how the possibilities and the competing incentives of some environments cannot be dealt with using the standard models of matching and network formation.
\begin{rev}
This model provides a possible mechanism for explaining some of the inefficiencies that we observe in the APS dataset discussed in Section \ref{subsec:empirical_motivation}, where researchers seem to congestion the overall activity because they do not  internalize the negative externalities that they have on each other.\end{rev}

\subsection{The publishing model}
\label{subsec:publishing}

We present here an extended example providing the general idea of the model.
We continue to adopt an intuition related to \begin{new_am}the insights presented in Section \ref{subsec:empirical_motivation} and to\end{new_am} the daily experience of everyone in the academic profession, but it is clear that it can easily be extended to R\&D between firms that are competing in a market, as in the model of \cite{GJ03}.
Consider a world where there are $n$ homogeneous scientific authors, each trying to form teams of collaborators and each with a common time constraint $w$.
They all have two goals: a good output in terms of publications (on which they compete with colleagues), but also the objective of doing good research that can provide advancements in the field. 
Each author maximizes in each project both the probability of being published and the probability of authoring a good idea.
We assume no constraint on the multi-matching technology $P$, except for the fact that agents cannot work alone: $p \in P$ if and only if $|n(p)| \geq 2$. 

Here we assume that a project $p$ has a strictly positive \emph{divulgative fitness} (i.e., an expected popularity) that we call $\phi(p)$.
The divulgative fitness of a paper may depend on the amount of work that is put into the paper by its members.
This fitness can clearly also be related to heterogeneous exogenous factors.

Accordingly, a paper's \emph{probability of being published} has the multinomial form:\footnote{We set up the model as if only one paper is \emph{published}, but this can be easily relaxed as long as the number of published papers is fixed and independent of the aggregate fitness.}
\[
P_{pub} (p) = \frac{ \phi (p)}{ \sum_{q \in x} \phi (q) }.
\]
When a new team is formed there are clear negative externalities (increasing in the divulgative fitness of the new project) for all the agents that are not members of the new team, because their probabilities of being published decrease. 

In a related but not necessarily collinear way, we assume that each project has a strictly positive probability of providing a good idea which is $P_{good} (p) $.
This probability is reasonably increasing in effort, but there may be communication and coordination costs which make it decrease in the size, in terms of members, of the team. Or, there could be positive externalities from the aggregate quality of all the scientific production as a whole. 
In general, the whole environment of $x$ can provide both positive and negative externalities, with network effects like those described in the \emph{connection model} and in the \emph{co-authorship model} of \cite{JacksonWolinsky96}.

To provide a simplified functional form, which maintains the general idea, we assume that each author $i$ receives a payoff in a generic state $x$ that is:
\[
u_i (x) = \sum_{p: i \in n(p)} 
\left( U P_{pub} (p) + \frac{V}{|n(p)|} \cdot P_{good} (p) \right),
\]
where $U$ and $V$ are positive numbers, homogeneous for all agents,\footnote{%
In a context of industrial organization, with R\&D between firms, $U$ could be the aggregate value of a fixed market, on which firms compete for shares, while $V$ could be the expected value of further markets that new products could open.}
and $P_{good} (p)$ does not depend on other existing projects in $x$. We observe that, while the utility $U$ coming from a publication is not affected by the number of authors (what matters is to have a publication in the curriculum vitae), the benefits $V$ deriving from a good idea must be shared among the participants (consider, for instance, the earnings that come from a patented idea).

\begin{new_am}
This utility is in line with what observed in Subsection \ref{subsec:empirical_motivation}. On the one hand authors benefit from taking part in many projects and from having multiple collaborators to accommodate and meet the time constraint while, on the other hand, they do not take into account the externalities which may cause a reduction in effort spent and on the quality of the project.
\end{new_am}

\bigskip

For this simplified model it is not difficult to prove that it satisfies Assumption \ref{ass:v0} \begin{rev}(while the other assumptions are not)\end{rev}.\footnote{%
It is possible to provide more complicated payoff functions, which are non-linear or for which $P_{good} (p)$ is state dependent, and which also satisfy Assumption \ref{ass:v0}.
A simple but reasonable first step of generalization is by assuming the existence of non-negative net externalities on $P_{good} (p)$ that come from a member of $n(p)$ that participates in other projects. This would bring our model closer to the \emph{connection model} than to the \emph{co-authorship model}. Indeed, the negative externalities of the \emph{co-authorship model} are an indirect way to take into account the scarcity of time that researchers face in their activity; the reason for such externalities is removed in our setting, where agents are explicitly given time endowments.}
That is because, for an agent $i$, if we call $\Phi \equiv  \sum_{q \in x: i \in n(q)}  \phi (q)$, the marginal utility for being member of a new  project $p'$ is:
\[
u_i (x \cup p') - u_i (x)
 = U \left( \phi(p') \frac{ \sum_{q \in x: i \not \in n(q)}  \phi (q) }{\left( \Phi + \phi(p')\right) \Phi }\right)+ V \cdot P_{good} (p').
\]
The first term is non-negative, and it is null only if that agent was already a member of each existing team.
The second term is strictly positive by definition.

\section{Myopic team-wise stability}\label{section:mts}

The first equilibrium notion that we provide -- called \emph{myopic team-wise stability} -- is a direct generalization of the concept of \emph{pair-wise stability}, from \citet{JacksonWolinsky96}, which is used in network formation games\begin{rev}, such as \citet{JacksonWatts02}. The original notion only considers activities performed in pairs while the present extension allows for activities performed by groups of different sizes\end{rev}.\footnote{\begin{rev}More precisely, in the model by \cite{JacksonWatts02} Assumption \ref{ass:s2} always holds and, in the language of \cite{rego2019co}, agents' capacity of forming links and performing tasks is assumed unbounded.\end{rev}}

\subsection{Myopically team-wise stable states}\label{section:def_mts}

The following definition formalizes the notion:
\begin{definition}
A state $x$ is \emph{myopically team-wise stable} {\bf [MTS]} if 
\begin{itemize}
\item[(i)] for any project $p \in x$, and for any agent $i \in n(p)$, we have that $u_i (x) \geq u_i (x \backslash \{  p  \}) $;
\item[(ii)] 
there exists no project $p \in P$ such that $x \cup \{ p \} \in X$ and, for any agent $i \in n(p)$, $u_i (x) \cup \{ p \} > u_i (x)$.
\eproof
\end{itemize}
\end{definition}

In words, a state $x$ is \emph{myopically team-wise stable}, if $(i)$ there is no agent that would be better off by deleting a project she belongs to; and $(ii)$ there is no project that could be added to state $x$, without hitting the constraints of its members, and which would make them all strictly better off.\footnote{We require a strict Pareto improvement for the members of a project that is going to be formed in order to conclude that a state is not myopically team-wise stable, while a weak Pareto improvement is usually considered sufficient. We remark that our choice -- which in principle yields a weaker equilibrium concept -- makes no actual difference if Assumption \ref{ass:v0} is satisfied.\label{foot2}} 
With some abuse of notation, we also denote by $MTS$ the set of states that are myopically team-wise stable.

The reason why we call it myopic is that it considers only deviations of one single step in the partially ordered set $X$.\footnote{%
The network concept of pair-wise stability is likewise \emph{myopic}.
On this see \cite{PWK05} and discussion in \cite{kirchsteiger2016limited}.}
Note also that, even if a dynamic is implicit in the definition, this concept of equilibrium is a static one.

\bigskip

The following Lemma is the first building block of our results.

\begin{lemma}\label{lem:myopic=maximal}
Take a team formation model satisfying Assumption \ref{ass:v0}. We have that $MTS = \mathcal{M}$.
\end{lemma}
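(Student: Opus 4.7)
The plan is to prove the two set inclusions separately, and both will follow almost directly from Assumption \ref{ass:v0}.

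For the direction $\mathcal{M} \subseteq MTS$, I would take any maximal state $x \in \mathcal{M}$ and check the two conditions in the definition of $MTS$. Condition (ii) is immediate: since $x$ is maximal, there is no project $p \in P$ with $x \cup \{p\} \in X$, so the existential condition fails vacuously. For condition (i), fix any project $p \in x$ and any member $i \in n(p)$, and consider the states $x' = x \setminus \{p\}$ and $x = x' \cup \{p\}$. By Assumption \ref{ass:v0} applied to the transition from $x'$ to $x$, we get $u_i(x) > u_i(x')$, which gives the required weak inequality (indeed, strictly).

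For the reverse direction $MTS \subseteq \mathcal{M}$, I would argue by contrapositive. Take a feasible state $x \notin \mathcal{M}$, so by definition of maximality there exists some project $p \in P$ with $x \cup \{p\} \in X$ and $p \notin x$. By Assumption \ref{ass:v0} applied to the transition from $x$ to $x \cup \{p\}$, every member $i \in n(p)$ satisfies $u_i(x \cup \{p\}) > u_i(x)$. This is precisely the configuration that condition (ii) of $MTS$ forbids, so $x \notin MTS$.

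There is essentially no obstacle here: the statement is really a restatement of Assumption \ref{ass:v0} in terms of the maximality of states, since \ref{ass:v0} says exactly that adding a feasible project strictly helps every new member, making condition (ii) of $MTS$ equivalent to the nonexistence of any feasible superset by one project. The only subtle point worth flagging in the write-up is that condition (i) of $MTS$ asks only for a weak inequality, but Assumption \ref{ass:v0} gives a strict one, which is why the weak clause (i) poses no issue; this is also consistent with the footnote \ref{foot2} that strict versus weak Pareto improvement makes no difference under \ref{ass:v0}.
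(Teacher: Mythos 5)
Your proof is correct and follows essentially the same route as the paper's: both directions are immediate consequences of Assumption \ref{ass:v0}, with maximality killing condition (ii) and the strict marginal gain from any project handling condition (i) (and, in the contrapositive, violating condition (ii) for non-maximal states). Your write-up is simply a more explicit version of the paper's two-sentence argument, including the correct observation that feasibility of $x \setminus \{p\}$ and the strict inequality from \ref{ass:v0} subsume the weak inequality required in clause (i).
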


\begin{proof}
If a state is maximal it is not possible to add a project, and any deletion would damage the members of the removed project (by Assumption \ref{ass:v0}). 
Therefore, that state is $MTS$.

Suppose that a state is not maximal, then it would be possible to add a project, which would add a positive marginal amount to the utility of all its members (by Assumption \ref{ass:v0}), and thus that state is not myopically team-wise stable.
\end{proof}

\subsection{Direct and indirect externalities} \label{subsec:inefficiencies}

Our specification allows for externalities between the agents, or for a non-trivial structure of preferences of the agents toward the other team members.
All the inefficiencies arising because positive and negative externalities are not endogenized by the agents would give rise to a comparison between stable and efficient outcomes that would be very similar to the one extensively analyzed in network formation models (see \citealp{Jackson05}).

However, even if utilities from states had a simple structure, e.g., as in the case of the one imposed by Assumption \ref{ass:v2}, numerous indirect effects would arise from the constraints imposed by the technology $P$, and by the vector $\vec{w}$ of endowments, as will be clear from the following examples.


First of all, consider again the case illustrated in Example \ref{ex:POset} and Figure \ref{fig:POset}.
Because of the constraints imposed by the technology, agents $j$ and $k$ clearly have a negative externality on the other two agents when they form a team together: by forming a team on an activity they reduce the available teams for agents $i$ and $m$.

\begin{example} \label{example:indirect_externalities}
Consider a team formation model with four agents: $i$, $j$, $k$ and $m$, all with an endowment of $3$ units of time.
Agent $i$ can form a team with $j$ only, if one of the two puts in $1$ unit and the other $2$ units of time.
The same holds for the couple formed by $k$ and $m$.
Agents $j$ and $k$, on the other hand, can form a team together by investing only one unit of time each.
As illustrated graphically in Figure \ref{fig:burgerking_s}, this team formation model has three myopically team-wise stable sets: $I$, $II$ and $III$.\footnote{We use here the same simplifying representation that we have employed in Figure \ref{fig:POset} and briefly discussed in footnote \ref{foot1}.}
The payoff of a project for an agent is always $\frac{1}{2}$.
Therefore, this example does not satisfy Assumption \ref{ass:t1}, but satisfies Assumptions \ref{ass:s2} and \ref{ass:v2}. 

In this case, even if all the teams have the same payoff for each of their members, there are indirect preferences for some of the agents.
In particular, if agents $j$ and $k$ could \emph{choose ex ante} their teams they \emph{would rather} form teams together, because this would allow them to form up to $3$ teams, while forming a team with $i$ or $m$ (respectively) would bind them to myopically team-wise stable sets where they can form at most two teams.
We will formalize in Section \ref{section:cs} the idea of `\emph{choosing ex ante}' and `\emph{would rather}'.
\end{example}

\begin{figure}[h]
\begin{center}
\includegraphics[height=10cm]{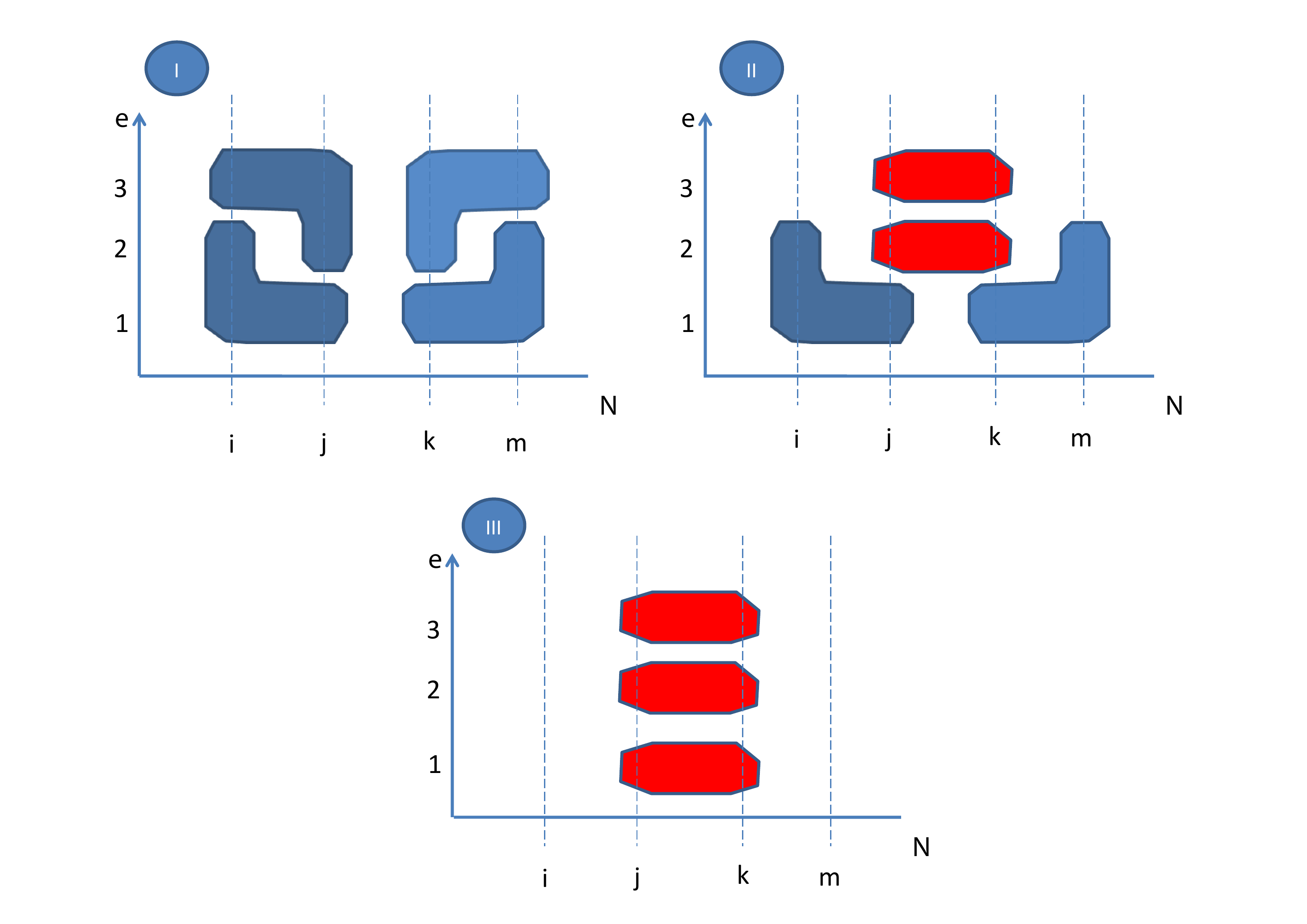}
\caption{$MTS$ states of Example \ref{example:indirect_externalities}.}
\label{fig:burgerking_s}
\end{center}
\end{figure}

\begin{example} \label{example:ij_jk}
As the model is specified, it is not possible to have indirect externalities of the following form: a team is feasible if and only if another team is not, and possibly also the other way round.
As an example consider Case I in Figure \ref{fig:hijkl}, where we may want to express a condition by which the team $(i,j)$ can be active only if the team $(k,m)$ is not active.
Another example could be one in which the same subset of agents cannot simultaneously work on more than one project, even if many are independently feasible.
It is possible however to model a similar situation including fictitious agents, with a dummy utility function, whereby the agents have limited resources of time and must be included as members in the teams that we want to be mutually exclusive.
We will not develop all the formal definitions of this approach, but we maintain the simple example given in Figure \ref{fig:hijkl}:
as illustrated in Case II it is possible to add a fictitious agent $h$ with $w_{h}=1$, and such that the possible teams are now  $(i,j,h)$ and  $(k,m,h)$.
In this way the original two teams become mutually exclusive. \begin{rev}Adding such fictitious agents does not affect the analysis nor the results and does not require any further assumption.\end{rev}
\end{example}

\begin{figure}[h]
\begin{center}
\includegraphics[height=8cm]{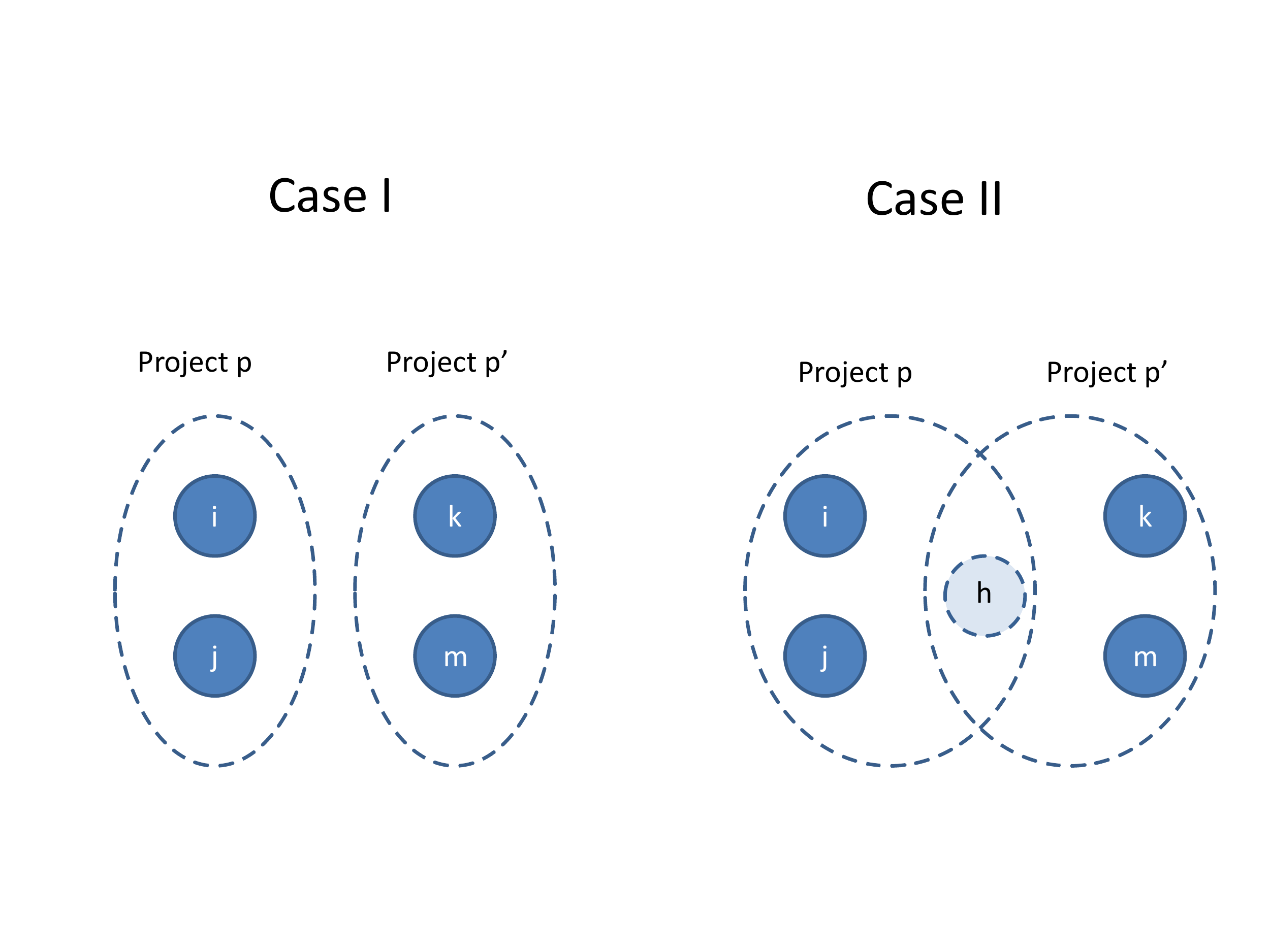}
\caption{The two cases discussed in Example \ref{example:ij_jk}.}
\label{fig:hijkl}
\end{center}
\end{figure}

\section{Stochastic stability}\label{section:ss}

To refine $MTS$ we consider an unperturbed dynamics where absorbing states are myopically team-wise stable states, and we then insert vanishing perturbations with the aim of refining our prediction by means of \emph{stochastic stability}.

\subsection{Unperturbed dynamics and preliminary results}
\label{subsection:ss_unperturbed}
In order to deal with stochastic stability, we need to introduce an underlying dynamics which describes the probabilistic passage from state to state, and then to add perturbations. In particular, we work with discrete time and we indicate it with $s = 0,1,\ldots$. We denote the state of the system at time $s$ with $x^s$. At time $s+1$ a single project $p \in P$ is drawn, with every project in $P$ having positive probability of being drawn.\footnote{See \citet[][Section~7.2]{newton2018evolutionary} for a recent survey of evolutionary models.} 

One remark is worth making at this point.
Extensive heterogeneity is allowed between the probabilities of different projects being drawn; for instance, it might be reasonable to assume that better projects (in some sense) are more likely to be selected and then implemented. As long as every feasible project has a positive, even tiny, probability of being drawn all our results remain valid.

The extracted project $p$ will actually be formed if $x \cup \{p\}$ is feasible, i.e., $\vec{e}(x \cup \{p\}) \leq \vec{w}$. Otherwise, such a team is not formed, since its creation is not possible due to resource constraints. In any case, state $x^{s+1}$ is reached. We refer to this dynamic process as \emph{myopic team-wise dynamics}.

A Markov chain $(X,D)$ turns out to be defined, where $X$ is the state space and $D$ the transition matrix, with $D_{xx'}$ denoting the probability of moving from state $x$ to state $x'$. We recall some concepts and results in Markov chain theory, following \citet{youn1}. Given any two states $x,x' \in X$, state $x'$ is said to be \emph{accessible} from state $x$ if there exists a sequence of states starting from $x$ and reaching $x'$ such that the system can move with positive probability from each state to the next state. A set $\mathcal{E}$ of states is called an \emph{ergodic set} (or \emph{recurrent class}) when each state in $\mathcal{E}$ is accessible from any other state in $\mathcal{E}$, and no state outside of $\mathcal{E}$ is accessible from any state in $\mathcal{E}$. If $\mathcal{E}$ is an ergodic set and $\vec{x} \in \mathcal{E}$, then $\vec{x}$ is called \emph{recurrent}. Let $\mathcal{R}$ denote the set of all recurrent states of $(X,D)$. If an ergodic set is made of a single element, such a state is called \emph{absorbing}. Equivalently, $x$ is absorbing when $D_{\vec{x}\vec{x}}=1$. Let $\mathcal{A}$ denote the set of all absorbing states of $(X,D)$. Clearly, an absorbing state is recurrent, hence $\mathcal{A} \subseteq \mathcal{R}$.

The following Proposition proves that there are no recurrent states other than absorbing states, and provides a characterization of absorbing states as maximal states, and thus as myopically team-wise stable states.
    
\begin{proposition}\label{ARMB}
Take a team formation model satisfying Assumption \ref{ass:v0} and a myopic team-wise dynamics. We have that
$\mathcal{A} = \mathcal{R} = \mathcal{M} = MTS$.
\end{proposition}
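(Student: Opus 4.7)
The plan is to establish the chain of inclusions $\mathcal{M} \subseteq \mathcal{A} \subseteq \mathcal{R} \subseteq \mathcal{M}$ and then invoke Lemma \ref{lem:myopic=maximal} for the last equality with $MTS$. The inclusion $\mathcal{A} \subseteq \mathcal{R}$ is immediate: every absorbing state forms by itself a singleton ergodic set, hence is recurrent. The key observation that will drive the other two inclusions is that the unperturbed dynamics is \emph{monotone}: it can only add projects, never remove them, so if a transition $x \to x'$ occurs with $x' \neq x$ then necessarily $x \subsetneq x'$.

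For $\mathcal{M} \subseteq \mathcal{A}$, pick $x \in \mathcal{M}$ and consider any project $p \in P$ drawn at the next time step. If $p \in x$, then $x \cup \{p\} = x$ and the state is unchanged. If instead $p \notin x$, then by maximality of $x$ the set $x \cup \{p\}$ fails feasibility, so by the rule of the dynamics the project is not formed and the state is again unchanged. Therefore $D_{xx} = 1$, proving $x \in \mathcal{A}$.

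For $\mathcal{R} \subseteq \mathcal{M}$, I would argue by contraposition: let $x \in X \setminus \mathcal{M}$ and show $x \notin \mathcal{R}$. Non-maximality provides some $p \in P$ with $p \notin x$ and $x \cup \{p\} \in X$. Since every project is drawn with positive probability at each step, we have $D_{x, x \cup \{p\}} > 0$, so the state $x' = x \cup \{p\} \supsetneq x$ is accessible from $x$. However, by the monotonicity remark above, no sequence of transitions from $x'$ can ever return to $x$, since any reachable state must contain $x'$ and thus strictly contain $x$. Therefore $x$ cannot belong to any ergodic set, so $x$ is not recurrent.

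Combining these three inclusions gives $\mathcal{A} = \mathcal{R} = \mathcal{M}$, and Lemma \ref{lem:myopic=maximal} then yields $\mathcal{M} = MTS$, which completes the proof. There is no real obstacle here: the argument is essentially a bookkeeping exercise that exploits the one-sided (add-only) nature of the myopic team-wise dynamics, and the only subtle point is to note that the extracted project $p$ already belonging to $x$ trivially leaves $x$ unchanged, which is what rules out non-trivial self-loops being mistaken for escapes in a maximal state.
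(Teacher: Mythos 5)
Your proof is correct and follows essentially the same route as the paper's: the cycle of inclusions $\mathcal{M} \subseteq \mathcal{A} \subseteq \mathcal{R} \subseteq \mathcal{M}$ rests on exactly the two observations the authors use (a non-maximal state escapes upward with positive probability and, by the add-only nature of the dynamics, can never return; a maximal state admits no change, hence is absorbing), with Lemma \ref{lem:myopic=maximal} supplying $\mathcal{M}=MTS$. The only cosmetic difference is that the paper explicitly credits Assumption \ref{ass:v0} for the fact that existing projects never dissolve, whereas you read this off the stated transition rule directly; under Assumption \ref{ass:v0} the two readings coincide.
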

\begin{proof}    
From the definitions of $\mathcal{R}$ and $\mathcal{A}$, we know that $\mathcal{A} \subseteq \mathcal{R}$. Moreover, from Lemma \ref{lem:myopic=maximal}, we have that $\mathcal{M} = MTS$.

Take $x \notin \mathcal{M}$. An additional project can be formed, and once formed, state $x$ will never be visited again in the future. This shows, by contraposition, that $\mathcal{R} \subseteq \mathcal{M}$ and $\mathcal{A} \subseteq \mathcal{M}$. 

Finally, consider any state $x \in \mathcal{M}$. Since existing projects never disappear \begin{rev}because of Assumption \ref{ass:v0}\end{rev}, and no new project is feasible starting from $x$, we can conclude that $x$ is absorbing. Therefore, $\mathcal{M} \subseteq \mathcal{A}$. 
\end{proof}

\subsection{Perturbed dynamics and stochastically stable states}

We are ready to introduce perturbations in the unperturbed dynamics considered in the previous subsection, and then use the techniques developed by \citet{foyo1}, \citet{youn2}, \citet{kamaro1}. Basically, we suppose that with a tiny amount of probability active projects may accidentally dissolve, and possibly (but not necessarily) non-existing projects can be formed if feasible. By so doing, the dynamic system under consideration becomes ergodic, and from known results it follows that there exists a unique probability distribution among states that is stationary and describes the limiting behavior of the Markov chain as time goes to infinity, irrespectively of the initial state. Then we consider the limit of this stationary distribution for the amount of perturbation decreasing to zero. Those states that are visited with positive probability in this limiting stationary distribution are called stochastically stable.

We invite the reader who is interested in a formal exposition of perturbed Markov chain theory to consult \citet{youn2} and \citet{Ellison00}, while in the following we simply make use of the \emph{resistance} function $r: X \times X \rightarrow \mathbb{R^+} \cup \{\infty\}$, where $r(x,x')$ indicates the minimal amount of perturbations required to move the system from $x$ to $x'$ in one unit of time. If $r(x,x') = 0$ then the system moves from $x$ to $x'$ with positive probability in the unperturbed dynamics, i.e., $T_{xx'}>0$, while $r(x,x') = \infty$ is interpreted as impossibility of moving from $x$ to $x'$ in one unit of time even when perturbations are allowed.

We rely on the techniques and results illustrated in \citet{foyo1}, \citet{youn2} and \citet{youn1}, as they provide a relatively easy way to identify which states are stochastically stable.\footnote{\citet{newton2021conventions}, building upon \citet{peski.10}, provides an asymmetry condition that implies stochastic stability and is applicable to a wide variety of settings.} 
More precisely, we restrict our attention to absorbing states (since there are no other recurrent states by virtue of Proposition \ref{ARMB}), and for any pair $(x,x')$ of absorbing states we define $r^*(x,x')$ as the minimum sum of the resistances between states over any path starting in $x$ and ending in $x'$.\footnote{\begin{rev}It is worth noting that here absorbing states refer to those of the corresponding unperturbed dynamics, $\mathcal A$, while no state is absorbing in the perturbed dynamics.\end{rev}} 
Then, for any absorbing state $x$, we define an $x$-tree as a tree having root at $x$ and all absorbing states as nodes. The resistance of an $x$-tree is defined as the sum of the $r^*$ resistances of its edges. Finally, the \emph{stochastic potential} of $x$ is said to be the minimum resistance over all trees rooted at $x$.

A state $x$ is proven to be stochastically stable \citep{foyo1} if and only if $x$ has minimum stochastic potential in the set of absorbing states. Intuitively, stochastic stability selects those states that are easiest to reach from other states, with ``easiest'' interpreted as requiring the fewest mutations (as measured by the stochastic potential).

We now introduce two alternative perturbation schemes in the unperturbed dynamics considered in the previous subsection. The two perturbation schemes lead us to the same results. The first one is called a \emph{uniform perturbation scheme}, and it is such that at every time $s$ each project $p \in P$ has an i.i.d.~probability $\epsilon$ subject to an error; such an error makes the project disappear if existing, and be formed if non-existing and $x^s \cup \{p\} \in X$. The second perturbation scheme is such that only existing projects can be hit by a perturbation, so that each existing project disappears with an i.i.d.~probability $\epsilon$. We refer to this modeling of errors as a \emph{uniform destructive perturbation scheme}.\footnote{%
For our result we do not require the probabilities to be equal, and every team $\vec{t}$ could have any state and time dependent utility $\eta (p,x,s,\epsilon)$, depending also on some positive real number $\epsilon$. Thus we merely require that all such probabilities converge to zero with the same order as $\epsilon$ goes to $0$.
\label{note:epsilon}
}
It is easy to check that the perturbed system is irreducible and aperiodic: from any state $x \in X$, every existing project may dissolve by means of perturbations, and then one project per period may form, leading the system to any state $x'$. Every project can form in the unperturbed dynamics, thanks to Assumption \ref{ass:v0}, and hence perturbations creating new projects (that are allowed only in the uniform perturbation scheme) do not play any significant role. Aperiodicity is ensured since there are no recurrent states other than absorbing states (see Proposition \ref{ARMB}).

We denote by $SS$ the set of stochastically stable states. The following proposition identifies stochastically stable states as the states with the maximum number of existing projects.

\begin{proposition}\label{prop:stst}
Take a team formation model satisfying Assumption \ref{ass:v0}, a team-wise dynamics and either a uniform destructive perturbation scheme or a uniform perturbation scheme. Then, $SS =\mathcal{L}$\begin{rev}, i.e., the set of states with maximum number of completed projects.\end{rev}
\end{proposition}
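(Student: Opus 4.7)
The plan is to apply the Freidlin--Wentzell / Young stochastic-potential machinery to the perturbed chain. By Proposition~\ref{ARMB} the absorbing states are exactly $\mathcal{M}$, so the minimum-resistance trees of interest are spanning in-trees on the vertex set $\mathcal{M}$, and $SS$ is the set of roots of trees of minimum total resistance.

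The central computation is the path-resistance between two maximal states, which I claim satisfies $r^{*}(x,x')=|x\setminus x'|$ under either perturbation scheme. For the upper bound I exhibit an explicit path: from $x$, destroy the projects in $x\setminus x'$ one at a time (each destruction costing one perturbation) to reach $x\cap x'$, then in the unperturbed dynamics successively draw and add the projects in $x'\setminus x$ in any order. Every intermediate state is contained in $x$ or in $x'$, so feasibility is preserved throughout, and each addition is a legitimate unperturbed transition since every project has positive drawing probability. For the lower bound I use the fact that the unperturbed dynamics is monotone in set inclusion (projects are only added, never removed), so each project in $x\setminus x'$ must disappear by means of a perturbation at some point along any $x$-to-$x'$ path, contributing at least $1$ to the total resistance.

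The second step is a tree-surgery argument. Fix $x\in\mathcal{M}$, a minimum-resistance $x$-tree $T$, and any target root $x^{*}\in\mathcal{M}$. Let $x^{*}=y_{0},y_{1},\ldots,y_{k}=x$ be the unique directed path from $x^{*}$ to $x$ in $T$. Reversing these $k$ edges while leaving all other edges of $T$ unchanged produces a spanning in-tree $T'$ rooted at $x^{*}$, and using the identity from step~1 the change in resistance telescopes:
\[
r(T')-r(T)=\sum_{i=1}^{k}\bigl(r^{*}(y_{i},y_{i-1})-r^{*}(y_{i-1},y_{i})\bigr)=\sum_{i=1}^{k}\bigl(|y_{i}|-|y_{i-1}|\bigr)=|x|-|x^{*}|,
\]
since $r^{*}(y,z)-r^{*}(z,y)=|y\setminus z|-|z\setminus y|=|y|-|z|$. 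If $x^{*}\in\mathcal{L}$ and $x\in\mathcal{M}\setminus\mathcal{L}$ then $|x|<|x^{*}|$, so $T'$ has strictly smaller resistance than $T$ and the stochastic potential of $x^{*}$ is strictly less than that of $x$, giving $x\notin SS$. Applying the same swap between two states of $\mathcal{L}$ yields a zero change in resistance, so all elements of $\mathcal{L}$ share the minimum stochastic potential, and we obtain $SS=\mathcal{L}$.

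The step I expect to require the most care is the resistance identity $r^{*}(x,x')=|x\setminus x'|$: the lower bound must exclude ``free'' removals by the unperturbed dynamics, and the upper bound must verify feasibility of every intermediate state along the constructive path. Once the identity is established the telescoping is immediate, and it automatically covers both perturbation schemes because any perturbed creation of a project in the uniform scheme can be replaced by a zero-cost unperturbed draw at the same point of the path, without increasing the total resistance.
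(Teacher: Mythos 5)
Your proposal is correct and follows essentially the same route as the paper's proof: the resistance identity $r^*(x,x')=\ell(x)-\ell(x\cap x')=|x\setminus x'|$, the path-reversal tree surgery with the telescoping sum yielding $|x|-|x^*|$, and the final step using existence of at least one stochastically stable state to get $\mathcal{L}\subseteq SS$. Your treatment is in fact slightly more explicit than the paper's on the lower bound for $r^*$ and on the feasibility of intermediate states along the constructive path.
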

\begin{proof} 
Consider any two absorbing states $x, x' \in X$. In order to move from $x$ to $x'$ it is necessary for all projects that exist at $x$ and do not exist at $x'$ to be stopped, and this can only occur by means of $\ell(x) - \ell(x \cap x')$ perturbations, both in the uniform destructive perturbation scheme and in the uniform perturbation scheme. 
In contrast, projects that exist at $x'$ and do not exist at $x$ can form in the unperturbed dynamics, thanks to Assumption \ref{ass:v0}. Therefore: 
\begin{eqnarray}\label{resist}
r^*(x, x') = \ell(x) - \ell(x \cap x').
\end{eqnarray}
We are ready to prove $SS \subseteq \mathcal{L}$. We proceed by contradiction. Suppose $x \notin \mathcal{L}$. Then we can find an $x'$ such that $\ell(x') > \ell(x)$. Take any $x$-tree and consider the path from $x'$ to $x$, say $(x', x_1, \ldots, x_i, \ldots, x_k, x)$. By (\ref{resist}), the sum of resistances over this path is $\ell(x') - \ell(x' \cap x_1) + \ell(x_1) - \ell(x_1 \cap x_2) + \ldots + \ell(x_{k-1}) - \ell(x_{k-1} \cap x_k) + \ell(x_k) - \ell(x_k \cap x)$. We now consider the $x'$-tree obtained from the $x$-tree by reversing the path from $x'$ to $x$. Again by (\ref{resist}), the sum of resistances over this reversed path is $\ell(x) - \ell(x \cap x_k) + \ell(x_k) - \ell(x_k \cap x_{k-1}) + \ldots + \ell(x_2) - \ell(x_2 \cap x_1) + \ell(x_1) - \ell(x_1 \cap x')$. Taking the difference between the above sums of resistances over the two paths, we obtain that the $x'$-tree has a resistance which is equal to the resistance of the $x$-tree $+ \ell(x) - \ell(x')$. Since $\ell(x') > \ell(x)$, we can conclude that for any $x$-tree we can find an $x'$-tree with a lower overall resistance, and hence the stochastic potential of $x'$ is lower than the stochastic potential of $x$. Therefore, $x$ cannot be stochastically stable.

We now prove $\mathcal{L} \subseteq SS$. Since at least one stochastically stable state must exist, and we have just seen that no state outside of $\mathcal{L}$ is stochastically stable, we can therefore conclude that there exists an $x \in \mathcal{L}$ that is stochastically stable. Consider any other $x' \in \mathcal{L}$.
Following exactly the same reasoning as above we obtain that the stochastic potential of $x'$ must be the same as the stochastic potential of $x$. Therefore, $x'$ is stochastically stable as well.
\end{proof}

Proposition \ref{prop:stst} is our main result. It provides a very precise characterization of stochastically stable states for every team formation model that satisfies the general assumption \ref{ass:v0}, and under \begin{rev}both \emph{destructive} and \emph{uniform perturbation schemes}\end{rev} that we have considered.


To aid intuitive comprehension, we provide the following discussion. In the representation of all the states in $X$ as a partially ordered set, with the empty state at the top and maximal states at the bottom, an error can be seen as a step upwards, while the adapting best response of agents (under Assumption \ref{ass:v0}) can be seen as a step downwards.
Consider Example \ref{ex:POset} as represented in Figure \ref{fig:POset}.
In this case, the bottom state is $SS$, because it would need at least two errors before the system can move in the unperturbed dynamics to another $MTS$ state.
To move away from the other two $MTS$ states, on the other hand, only one error is required.


\begin{rev}
In line with one of the main results by \cite{JacksonWatts02}, which is that stochastic stability selects the complete network among the many network structures obtained as pairwise equilibria, in our framework this is obtained directly by applying Proposition \ref{ARMB}: if the technology allows it, the complete network is the one that maximizes the number of projects realized. 
\end{rev}

Clearly, the fact that $SS$ states maximize the number of teams in the state does not tell us anything about efficiency, because the utility function could have a structure that highly rewards agents in states with few teams.
\begin{rev}
Actually, in the APS dataset discussed in Section \ref{subsec:empirical_motivation} the evidence seems to suggest that researchers tend to accept every new project, but this causes inefficiencies because of negative externalities and congestion. However, under Assumption \ref{ass:v1}, we can prove the following corollary.\end{rev}

\begin{corollary}
Given a team formation model, under Assumption \ref{ass:v1}, every stochastically stable state is Pareto efficient.
\end{corollary}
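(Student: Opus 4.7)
The plan is to chain together Proposition \ref{prop:stst} with Assumption \ref{ass:v1} and the standard observation that any utilitarian-maximal feasible state is Pareto efficient.

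First I would invoke Proposition \ref{prop:stst} to identify $SS$ with $\mathcal{L}$, the set of feasible states with the maximum number of completed projects. So any stochastically stable $x$ satisfies $|x| \geq |x'|$ for every $x' \in X$.

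Next I would rewrite this in terms of aggregate utility. By Assumption \ref{ass:v1}, for every feasible state $y$ we have $\sum_{i \in N} u_i(y) = |y|$. Hence $x \in SS$ implies
\[
\sum_{i \in N} u_i(x) \;=\; |x| \;\geq\; |x'| \;=\; \sum_{i \in N} u_i(x') \quad \text{for every } x' \in X,
\]
so stochastically stable states are precisely utilitarian maximizers over $X$.

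Finally I would close the argument by contradiction: suppose some $x \in SS$ is not Pareto efficient. Then there exists a feasible $x' \in X$ with $u_i(x') \geq u_i(x)$ for all $i \in N$ and $u_j(x') > u_j(x)$ for some $j$. Summing strictly yields $\sum_i u_i(x') > \sum_i u_i(x)$, i.e., $|x'| > |x|$, contradicting $x \in \mathcal{L}$. The argument is essentially a one-liner once Proposition \ref{prop:stst} is in hand, so I do not anticipate a substantive obstacle; the only minor care needed is to verify that the Pareto comparison is taken within the set $X$ of feasible states (which is implicit, since infeasible collections of projects are not states of the model and hence not admissible alternatives).
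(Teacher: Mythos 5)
Your proof is correct and follows exactly the route the paper takes: Proposition \ref{prop:stst} gives $SS=\mathcal{L}$, Assumption \ref{ass:v1} converts project-count maximization into aggregate-utility maximization, and Pareto efficiency follows since a Pareto improvement would strictly raise the utility sum. You merely spell out the final contradiction step more explicitly than the paper does; there is no substantive difference.
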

\begin{proof}
By Proposition \ref{prop:stst} a stochastically stable state $x$ maximizes the number of projects $\ell (x)$.
By Assumption \ref{ass:v1} we have $ \ell (x) = \sum_{i \in N} u_i ( x  ) $, so that $x$ also maximizes the aggregate utility to agents.
Thus, there is no other state that can provide a higher utility for some agent, without damaging any other agent.
\end{proof}


\subsection{Descriptive value of stochastic stability}\label{section:descriptive}

In this section we have shown that in a team formation model stochastically stable states coincide with \begin{rev}the maximal states with the maximum number of projects\end{rev}. 
This result rests not only on Assumption \ref{ass:v0} and a large class of perturbation schemes, but also on individual behavior that is boundedly rational. In particular, no agent will ever exit from existing projects in order to free up time and start other projects, despite the fact that this may increase her utility. 
Given this circumstance, one may then query the descriptive value of stochastic stability. An answer clearly depends on the specific case under consideration. We limit ourselves to the following observations. Even if agents have sufficient cognitive skills to recognize the possibility of an increase in utility, there are at least two kinds of reasons that might prevent them from doing so. First, coordination issues: in order to carry out a utility enhancement, an agent has to quit projects with some teams and contextually start other projects with new teams, and this involves coordinating the actions of several agents. Second, switching costs: these costs may be due to legal obligations -- consider for instance divorce costs in marriage -- or to learning how to operate in new teams. 

Nevertheless, in the following section we analyze a refinement of myopic team-wise stability based on strong rationality and absence of coordination and switching costs. 

In \ref{app:robust} we consider a variant of the team formation model where agents are endowed with unlimited cognitive and coordination skills, but
face switching costs when leaving an existing project. We show that when switching costs are high enough, stochastically stable states are exactly those states having the maximum number of existing projects.

\section{Coalitional stability}\label{section:cs}

In this section we introduce a concept of stability -- \emph{coalitional stability} -- which is strongly based on coordination opportunities and rationality of agents. Then we compare it with myopic team-wise stability and stochastic stability, providing a class of situations where coalitional stability has no refining power.

\subsection{Coalitionally stable states}\label{section:def_cs}

We consider the following definition:
\begin{definition}
\label{def:cs}
A state $x$ is \emph{coalitionally stable} {\bf [CS]} (or \emph{coalition proof}) if there exists no subset $C \subseteq N$ such that
\begin{itemize}
\item[(i)] there exists a set of projects $y \subseteq x$ such that $\forall ~ p  \in y$, $\exists ~ i \in C$ such that $i \in n(p)$;
\item[(ii)] there exists a set of projects $z$ such that $z \cap x = \emptyset$ and, $\forall ~  p  \in z$, if $j \in n(p)$ then $j \in C$;
\item[(iii)] $ (x \backslash y ) \cup z \in X$ and for any agent $i \in C$ we have that $u_i ( (x \backslash y ) \cup z ) > u_i (x)$. 
\eproof
\end{itemize}
\end{definition}

In words, a state $x$ is \emph{coalitionally stable} if there is no coalition that can $(i)$ erase a set of projects, 
each of which contains at least one agent of the coalition, $(ii)$ form a set of other projects, where all the members of each are also members of the coalition, and $(iii)$ make all the members of the coalition strictly better off in the resulting state.\footnote{In contrast to what happens for myopic team-wise stability (see footnote \ref{foot2}), the choice to require a strict Pareto improvement for the agents of a blocking coalition -- instead of asking for a weak Pareto improvement -- can enlarge the set of coalitionally stable states even when Assumption \ref{ass:v0} is satisfied. However, this difference ceases to exist when we introduce costs to exit from existing projects (see \ref{app:robust}).}
We denote by $CS$ the set of states that are coalitionally stable.

It is evident that this definition allows for a profound rationality by the agents: they can identify and coordinate a deviation through a long path in the partially ordered set $X$.
\begin{rev}
Notice, in particular, that in $CS$ deviations by larger coalitions are allowed, so the equilibrium requirement is more binding than under  $MTS$, hence $CS \subseteq MTS$. Whether this inclusion can become an equality is discussed more in depth in Proposition \ref{prop:coalitional=team-wise}.\end{rev}\footnote{\begin{rev}The concept of coalitional stability is analogous to the \emph{core} in cooperative game theory, in a context of coalition formation. To define the core one needs to explicitly define the agents' strategies, which is possible but not useful for our analysis.\end{rev}}
This definition \begin{rev}of $CS$\end{rev} allows agents to maintain some of their existing teams with other agents outside of the coalition, and in this sense it is a generalization of \emph{bilateral deviations} defined in a network formation setting by \citet{GV07}.
The literature on clubs (e.g., \citealp{Pauly70} and \citealp{faias2017endogenous}) focuses on deviations where all \emph{clubs} are deleted when members are both outside and inside the coalition.
Note finally that we are not providing a general result of existence of $CS$ states, for which we may need a more general concept of stability such as the one provided in more specific settings by \citet{HMV09,HMV10} and \citet{mauleon2018constitutions}.
However, we will focus on situations in which we can compare $CS$ states with $SS$ states, so that a simpler definition suffices.\footnote{%
Some clarification is needed.
The definition of \emph{farsighted coalitionally stable} states, as proposed by \citet{HMV09,HMV10} in a context where agents are farsighted players who evaluate the desirability of a deviation in terms of its future consequences (see also \citealp{dutta2005farsighted} and \citealp{navarro2013expected}), can easily be generalized to our context, and this is what we achieve in \ref{App:FSS}.
The good thing about the above definition is that there always exists a farsighted coalitionally stable state, while existence is not guaranteed for simple coalitionally stable states, as we define them.
Accordingly, they are a super-set of the \emph{coalitionally stable} states.
However, our point is that in many contexts coalitional stability has too little predictive power, so that the coalitionally stable states are too numerous or can be anything.
We are not concerned here with the point that in other contexts coalitional stability can be a concept so restrictive that no state satisfies it.}
In \ref{app:robust} we provide more general definitions, which integrate this approach with that of stochastic stability.

\subsection{A comparison between notions of stability}\label{section:comp}

In the representation of all the states in $X$ as a partially ordered set, the deviation of a coalition can be seen as a path that moves first upwards and then downwards. 
Consider as an example the whole states of Example \ref{ex:POset} and Figure \ref{fig:POset}.
Suppose that in this case the utility that agents $j$ and $k$ receive from being together is always greater than the utility they receive from being respectively with $i$ and $m$.
Then, the bottom state that maximizes the number of projects would not be $CS$:
$j$ and $k$ could coordinate to delete all the existing projects and start two projects together, moving to the state on the extreme right.
Therefore, in general, $SS$ and $CS$ states are not necessarily related concepts and can have empty intersections.
This can also clearly be seen if we look back at Example \ref{example:indirect_externalities} and the related Figure \ref{fig:burgerking_s}.
$I$ and $II$ are stochastically stable, because they maximize the number of teams at four, while $II$ and $III$ are coalitionally stable, because $I$ can be broken by the coalition $\{ j,k\}$.

There are cases in which $SS$ and $CS$ states can be a subset of one another, and given the flexibility of the utility function $\vec{u}$, a setup can always be provided, even under Assumption \ref{ass:v0}, such that any desired subset of states would always be chosen by the grand coalition $N$. 

On the other hand, there are many other cases, based on simple and general utility functions, where the $CS$ states do not provide a clear or apparently improving refinement upon the $MTS$ states.
As an example, consider that in general $CS$ states may not even be Pareto efficient, because agents adhere to deviating coalitions only if their marginal profit from doing so is strictly positive: thus, it could be that a Pareto improving deviation is feasible through a coalition whose members will not all be strictly better off. 
An instance of this sort is in the case provided by Example \ref{ex:POset}, under Assumption \ref{ass:v2}: the top right maximal state of Figure \ref{fig:POset}, with only agents $b$ and $c$ forming teams, is a $CS$ state, even if it is Pareto dominated by the bottom left state (with all agents taking part in two teams each).

Here below we present a result for a case (which actually encompasses the previous example) where $CS$ states coincide with $MTS$ states, so that they provide no refinement at all with respect to the myopic and boundedly rational concept of myopic team-wise stability.

\begin{proposition} \label{prop:coalitional=team-wise}
Given a team formation model, under Assumptions \ref{ass:t1} and \ref{ass:v2}, we have that $CS = MTS$.
\end{proposition}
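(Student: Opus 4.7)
The inclusion $CS \subseteq MTS$ holds unconditionally: a violation of MTS condition (i) by a pair $(i,p)$ is a coalitional deviation with $C=\{i\}$, $y=\{p\}$, $z=\emptyset$, and a violation of condition (ii) by a project $p$ is a coalitional deviation with $C=n(p)$, $y=\emptyset$, $z=\{p\}$. Contrapositively, any state in $CS$ lies in $MTS$. So the real content of the proposition is the reverse inclusion $MTS \subseteq CS$, which is where both Assumptions \ref{ass:t1} and \ref{ass:v2} enter.

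My plan for the hard direction is to argue by contradiction. Suppose $x \in MTS$; by Lemma \ref{lem:myopic=maximal} the state $x$ is then maximal. Assume a coalition $C$ with sets $(y,z)$ as in Definition \ref{def:cs} blocks $x$, so that $x':=(x\setminus y)\cup z\in X$ and $u_i(x')>u_i(x)$ for every $i\in C$. The strategy is to extract a single project from $z$ that could have been added to $x$ on its own, producing a feasible strict superset of $x$ and contradicting maximality.

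For any $w\subseteq P$ set $d_i(w):=|\{p\in w:i\in n(p)\}|$. By Assumption \ref{ass:v2}, $u_i(\cdot)=v\cdot d_i(\cdot)$, so the strict improvement of every $i\in C$ becomes $d_i(z)-d_i(y)\geq 1$. By Assumption \ref{ass:t1}, $e_i(\cdot)=d_i(\cdot)$, so feasibility of $x'$ reads $e_i(x)+d_i(z)-d_i(y)\leq w_i$, and combining these two inequalities gives $e_i(x)\leq w_i-1$ for every $i\in C$. In particular $z\neq\emptyset$, since otherwise $d_i(z)=0$ would force $d_i(y)\leq -1$. Pick any $p^*\in z$. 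By condition (ii) of Definition \ref{def:cs} we have $n(p^*)\subseteq C$, so every member of $p^*$ has at least one unit of unused time in $x$. Under Assumption \ref{ass:t1}, forming $p^*$ consumes exactly one extra unit of time per member and leaves all other agents untouched, so $x\cup\{p^*\}$ is feasible; since $z\cap x=\emptyset$ we have $p^*\notin x$, and hence $x\cup\{p^*\}\supsetneq x$ is a feasible strict superset, contradicting the maximality of $x$.

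The main obstacle is conceptual rather than computational: one must recognize how the two assumptions interlock. Assumption \ref{ass:v2} converts strict utility improvement into an integer lower bound on project-count gain for each coalition member; Assumption \ref{ass:t1} then converts that integer bound into at least one unit of slack time per coalition member, which is precisely what is needed to slot a single project from $z$ into $x$ on its own. If either assumption is dropped, this chain breaks—one could strictly benefit in the coalitional deviation without creating exploitable slack in $x$—and, as the discussion surrounding Example \ref{example:indirect_externalities} shows, coalitional stability can then genuinely refine myopic team-wise stability.
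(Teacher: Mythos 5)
Your proposal is correct and follows essentially the same route as the paper's proof: assume an $MTS$ state is blocked by a coalition, use Assumption \ref{ass:v2} to turn the strict utility gain into a gain of at least one project per coalition member, combine with Assumption \ref{ass:t1} and feasibility of $(x\setminus y)\cup z$ to get a unit of slack time for each member of $C$, observe $z\neq\emptyset$, and conclude that any single $p^*\in z$ (whose members all lie in $C$) could already be added to $x$, a contradiction. The only cosmetic difference is that you close the argument via maximality (Lemma \ref{lem:myopic=maximal}, which applies since Assumption \ref{ass:v2} implies Assumption \ref{ass:v0}) whereas the paper contradicts condition (ii) of $MTS$ directly, and you make explicit the arithmetic $d_i(z)-d_i(y)\geq 1$ that the paper states more tersely.
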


\begin{proof}
Since \begin{rev} larger coalitions are allowed to deviate under $CS$\end{rev}, we always have that $CS \subseteq MTS$. We focus on showing that $MTS \subseteq CS$.
Let us suppose this is not the case; then there is a team-wise stable state $x$ which is not coalitionally stable.
From the definition, this means that in $x$ there is a coalition $C$ that can erase a set $y$ of projects and start a set $z$ of new projects.

As agents in $C$ need to strictly increase their utility (i.e., the number of teams to which they belong, from Assumption \ref{ass:v2}), for each of the agents there is at least one unit of free time in $x$; formally we have that $e_i (x) < w_i$ for each $i \in C$.

As all the projects give strictly positive payoffs, then $z$ is nonempty.

But then, as each team in $z$ is formed by all and only agents from $C$, and by Assumption \ref{ass:t1} it would cost one unit of time each, then each project $p \in w$ could be started already in $x$, and $x \cup \{p \} \in X $, i.e., it would be feasible.
We have reached a contradiction with the hypothesis that $x$ is a team-wise stable state.
\end{proof}

\section{Conclusions and future research}\label{section:conclusion}

In this paper we have provided a model which describes how teams of individuals arise in order to perform activities, investing amounts of a scarce resource (typically time) to conduct the activities. The kinds of interaction that can be modeled in the proposed framework are many and widespread in economic and social spheres. Unfortunately, in a context like this the complexity of analysis can increase very rapidly, so that predictions become very hard to make. Nevertheless, we introduce and discuss alternative notions of stability -- myopic team-wise stability, stochastic stability and coalitional stability -- and we are able to provide results that are rather clear-cut (especially for stochastic stability).

Future work can highlight the relevance of the model for specific applications. The setting that we have provided is sufficiently general and flexible to accommodate many different sets of assumptions, and this allows a proper fine-tuning of the model. The examples throughout the paper  illustrate its applicative potential. In \ref{app:robust}, a variant of the model is presented where we add switching costs, which can be considered as a realistic feature for several applications. A promising direction for research could explore its potential applicability to the job market, where the result that stochastic stability selects states with the highest number of projects has an interesting interpretation in terms of unemployment reduction. 

On a purely theoretical ground, we provide here below three possible lines along which research may lead to interesting advancements.

The first question is related to the concept of coalitional stability that we define in Section \ref{section:cs}.
We are able to provide examples of non-existence of coalitionally stable states, but also (as in Proposition \ref{prop:coalitional=team-wise}) to prove their existence under specific assumptions.
We partly address this question in \ref{App:FSS}, where we define a wider set of \emph{farsightedly stable states}, whose existence is always granted.
However, we conjecture that the existence of simple coalitionally stable states holds also for more general assumptions than those of Proposition \ref{prop:coalitional=team-wise}.

The second question is related to welfare issues, and follows from the discussion in Sections \ref{section:ss} and \ref{section:cs}: under which conditions on the utility function are $SS$ states and $CS$ states Pareto efficient, or do they maximize the objective function of some social planner?
Clearly a simple case is the one in which the objective function is monotonic in the number of teams, so that $SS$ states are \emph{optimal}; or the case in which the utility function is monotonic for all agents in the objective function, so that $CS$ states would be \emph{optimal}, because even the grand coalition made of all the $N$ agents is better off in those states.
But how much of the two simple statements above can be generalized in order to have non-trivial results?

The third question follows from the previous one but has a mechanism design approach. 
Suppose that, through incentives, a planner can slightly change the utility function, not to obtain the trivial forms discussed above but something approaching that. Alternatively, the planner could have the possibility of modifying the technology, at least to the point at which feasibility depends on the structure of connections among agents: only agents that are close (in some sense) can work together, and the planner can adopt policies to affect who is close to whom.
What are the sufficient conditions that would allow the planner to make $SS$ or $CS$ states Pareto efficient, or make them approach maxima of some objective function?

\begin{rev}
The last two questions could also have empirical applications and, possibly, policy implications if applied to contexts like academic research and organization of researchers' communities, as exemplified by our motivational description of the APS dataset in Section \ref{subsec:empirical_motivation}. In such environments, there are many possible objective functions that a social planner can aim at, and choosing one would be the first step to start thinking about an ideal mechanism.
\end{rev}




\bibliographystyle{chicago}
\bibliography{ss_mt}

\setcounter{section}{1}

\appendix
\global\long\def\thesection{Appendix \Alph{section}}
 \global\long\def\thesubsection{\Alph{section}.\arabic{subsection}}
 \setcounter{equation}{0} \global\long\def\theequation{\alph{equation}}
 \setcounter{proposition}{0} \global\long\def\theproposition{\Alph{proposition}}
 \setcounter{definition}{0} \global\long\def\thedefinition{\Alph{definition}}

\newpage
\begin{new_am}
\section{Additional information on APS data}
\label{app:data}

In this section we provide some figures that help to describe the data analyzed in Section \ref{subsec:empirical_motivation}.
To analyze individual careers, we focus on authors that have a lengthy and consistent career of at least 25 years and whose first year of publication, called author's \emph{cohort}, is after 1960. 
Figure \ref{fig:app1} contains information about this sub-sample of authors and, in particular, it shows the distributions of the number of authors per cohort (left), the authors' career lengths (right) and the distribution of authors' \emph{active years}, where an year is considered active year for an author if she has published at least one paper in that year. Consequently, the number of an author's active years is the count of such years for that author.
Such Figure then shows that this subsample consists of a set of authors that are well spread terms of cohort year (the first cohorts, 1960-65, obviously are less numerous but afterwards the number of authors in each cohort remains constant enough), that the median author has a career of 32 years and has 15 active years (meaning one papers recorded every 2 years).

\begin{figure}
    \centering
    \includegraphics[width=.7\textwidth]{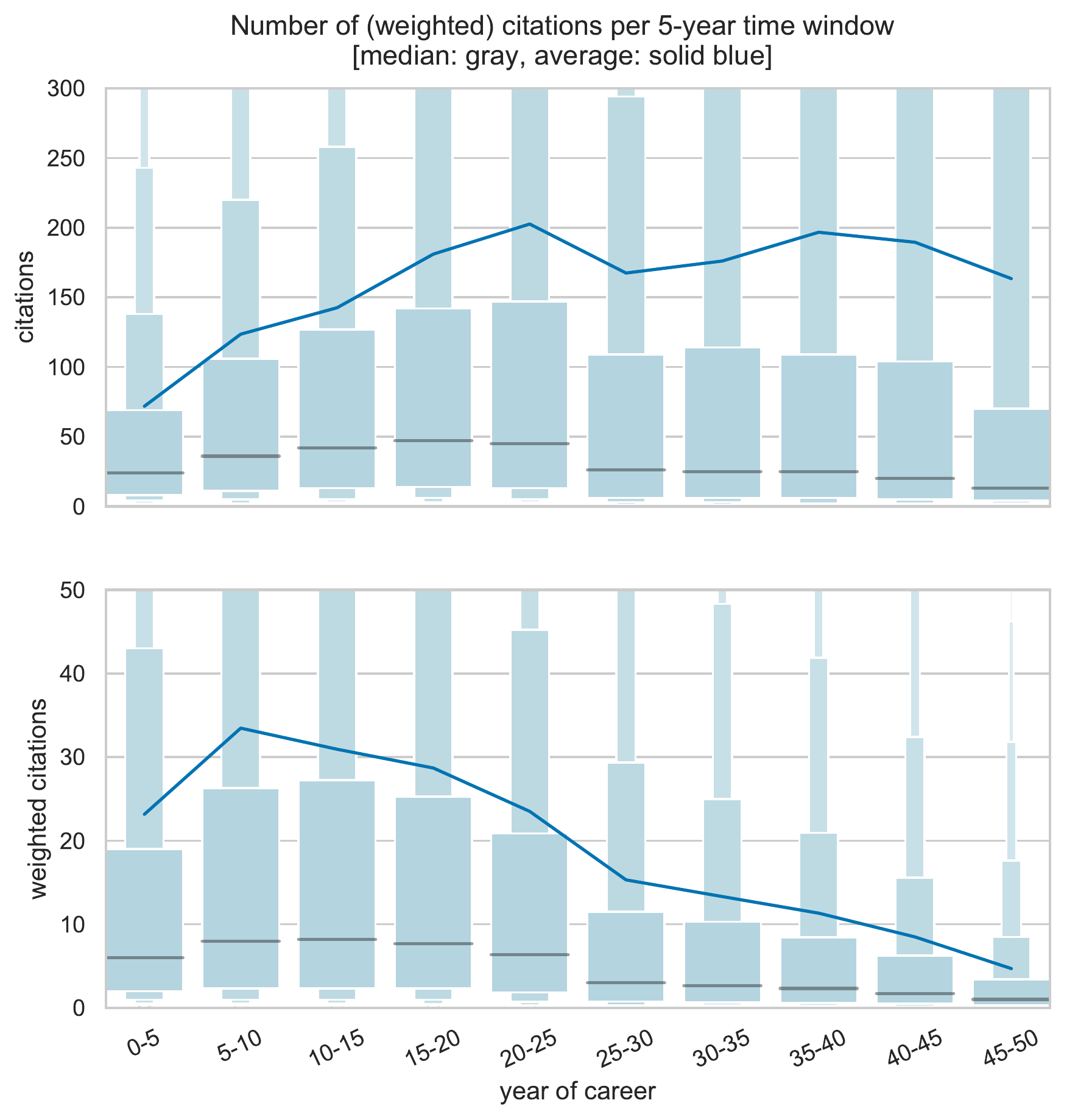}
    \caption{Trends of citations to individuals. A weighted citation received by a paper is computed as $\frac{1}{\text{n authors of receiving paper}}$.}
    \label{fig:2}
\end{figure}

\begin{figure}[ht]
    \centering
    \includegraphics[width=.55\textwidth]{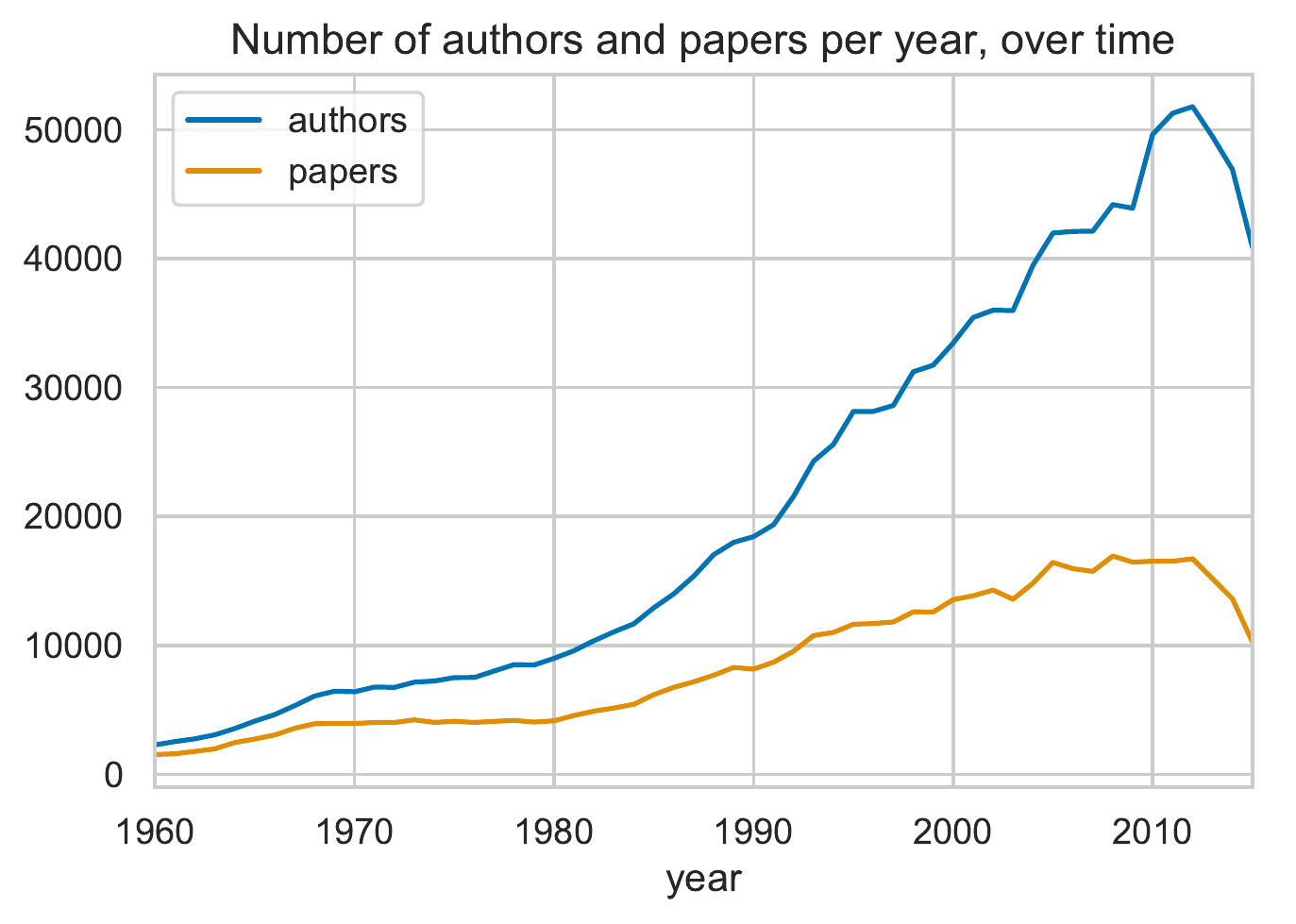}
    \caption{}
    \label{fig:3}
\end{figure}

\begin{figure}[h]
    \centering
	\begin{subfigure}[t]{.32\textwidth}
		\includegraphics[width=\textwidth]{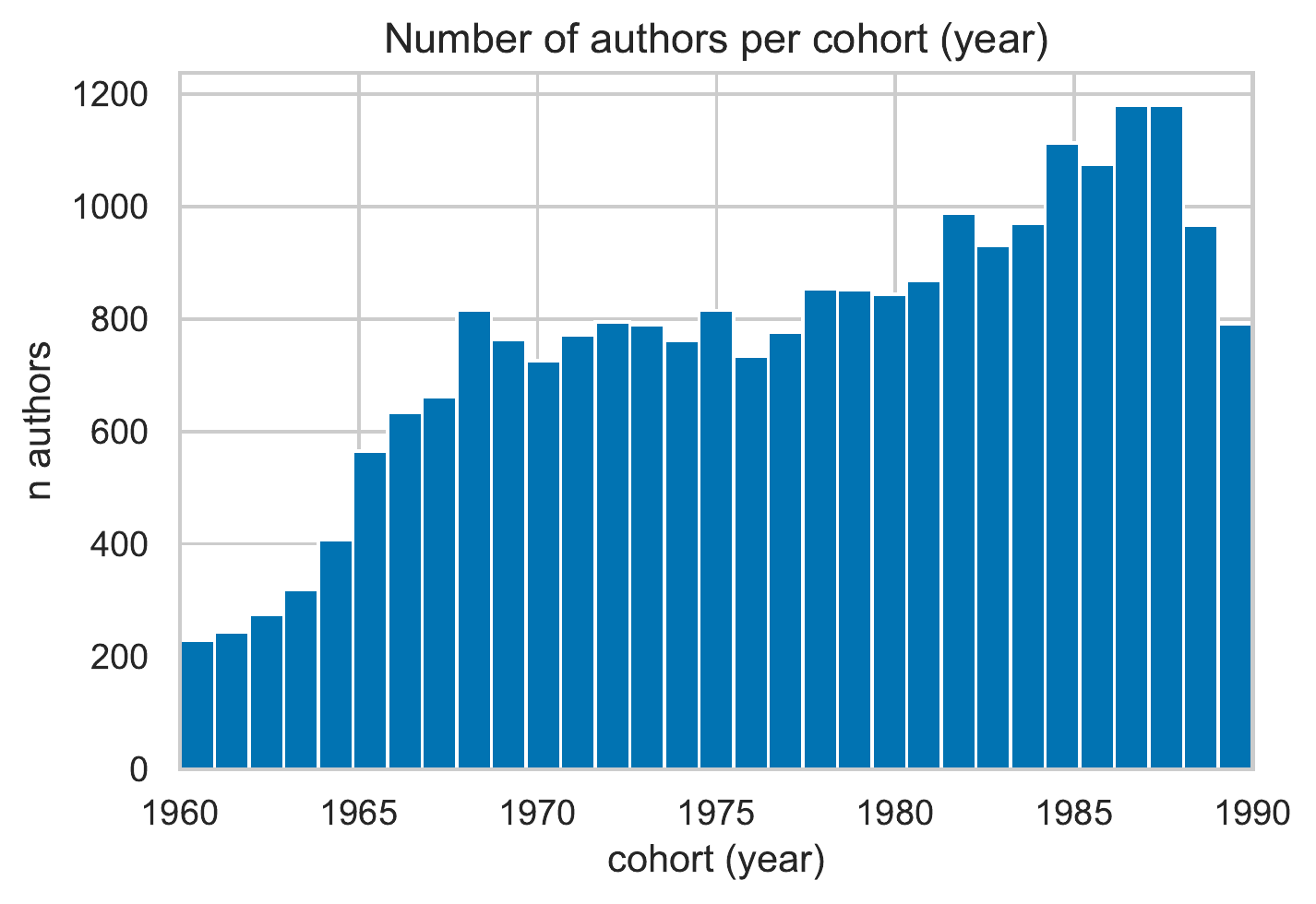}
	\end{subfigure}
	\begin{subfigure}[t]{.32\textwidth}
		\includegraphics[width=\textwidth]{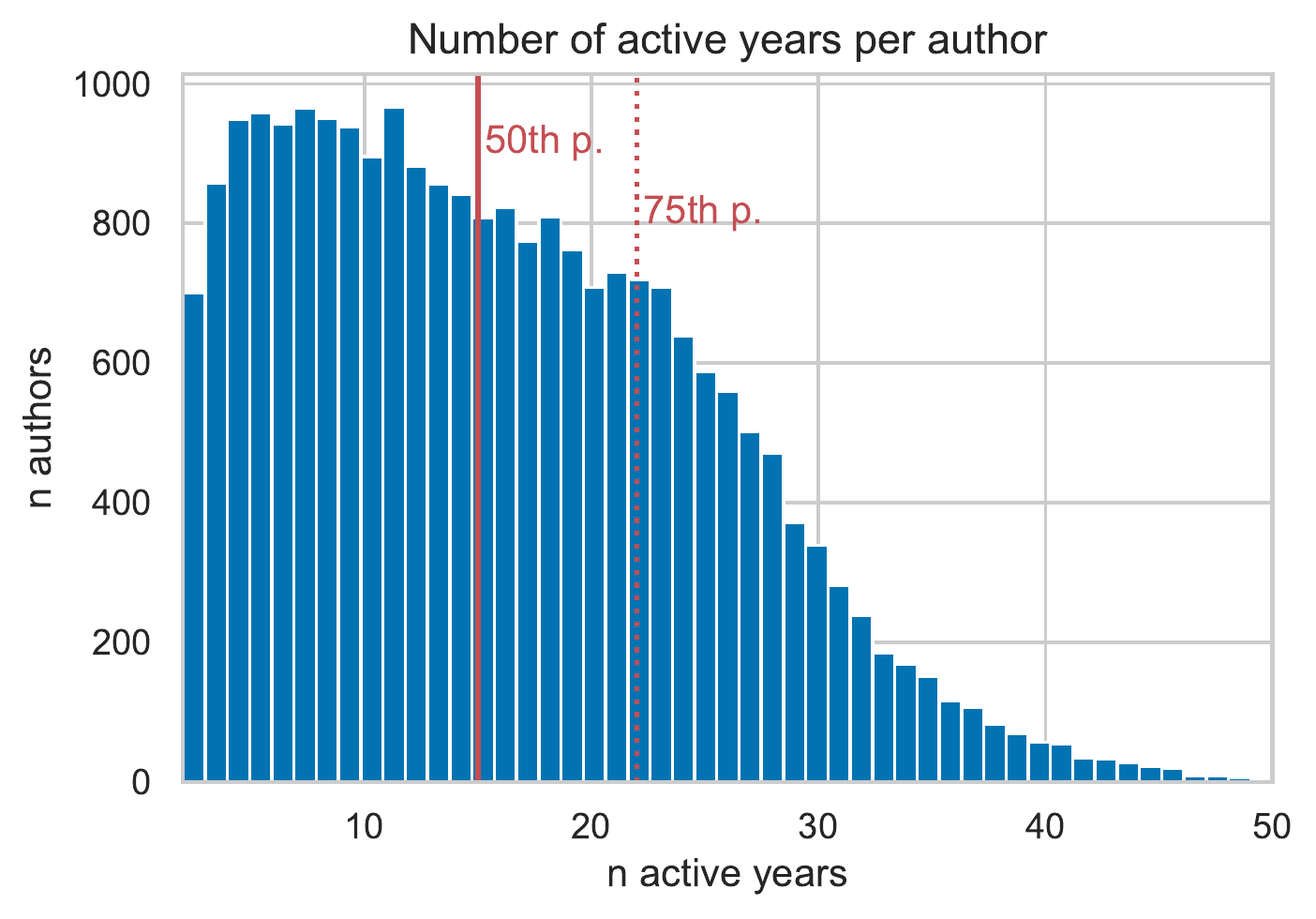}
	\end{subfigure}
	\begin{subfigure}[t]{.32\textwidth}
		\includegraphics[width=\textwidth]{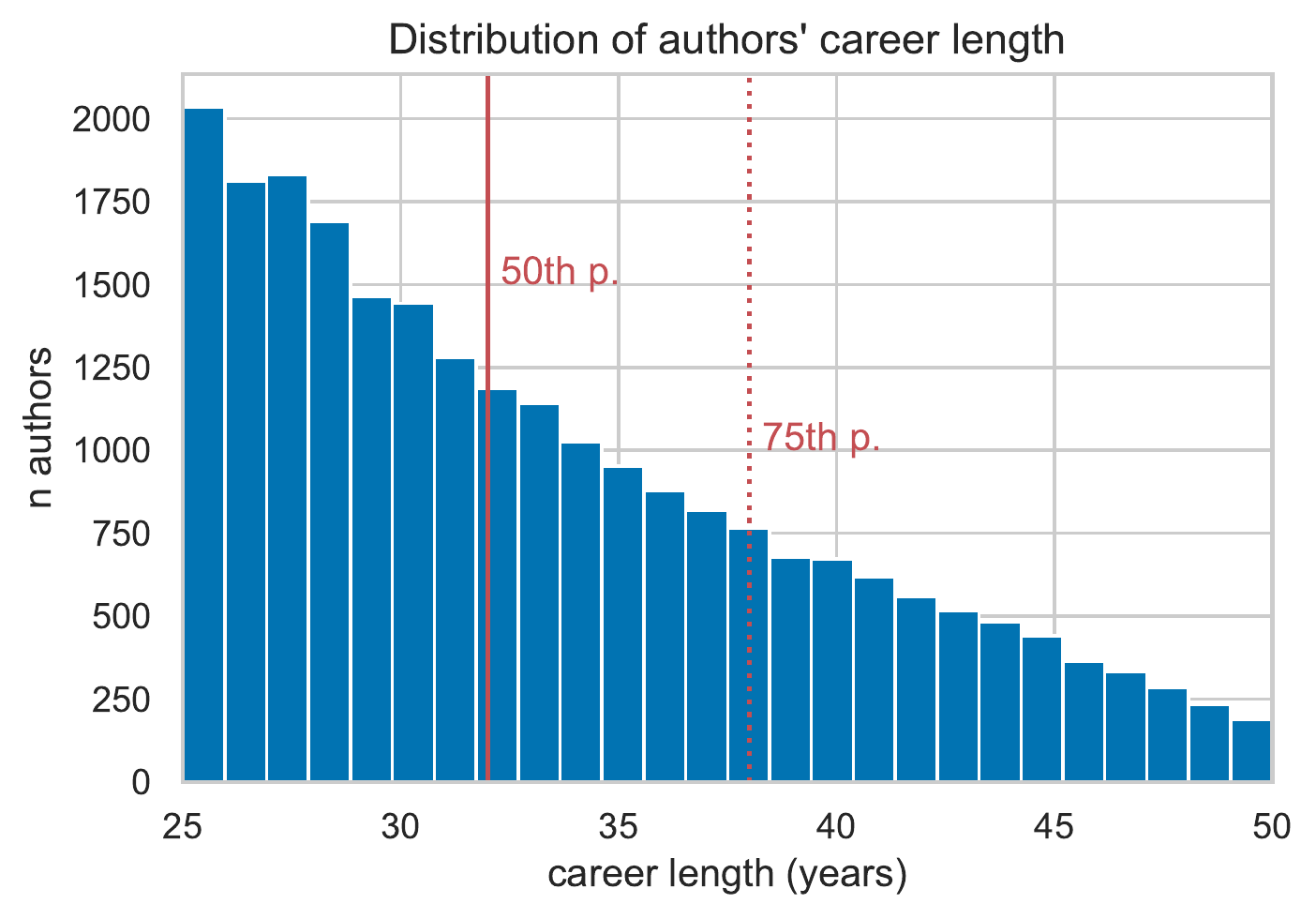}
	\end{subfigure}
	\caption{Descriptive statistics for authors in subsample ($\sim$ 24,000 authors): authors whose cohort appeared after 1960 and whose career length is 25 to 50 years.}
	\label{fig:app1}
\end{figure}

Figure \ref{fig:app2} contains the distribution of citations received papers (left), which, with the help of a logarithmic scale on the y-axis, shows that while most papers (i.e., 75th percentile) receive less than 20 total citations, the top 1 percentile of papers receive more than 100 citations. In Figure \ref{fig:app2} (right), the distribution of papers' citation age (that is, how many years have passed from the citing paper to the cited paper) shows that most of a paper's citations accumulate in the first 5-7 years after its publication.

\begin{figure}[tb]
    \centering
	\begin{subfigure}[t]{.49\textwidth}
		\includegraphics[width=\textwidth]{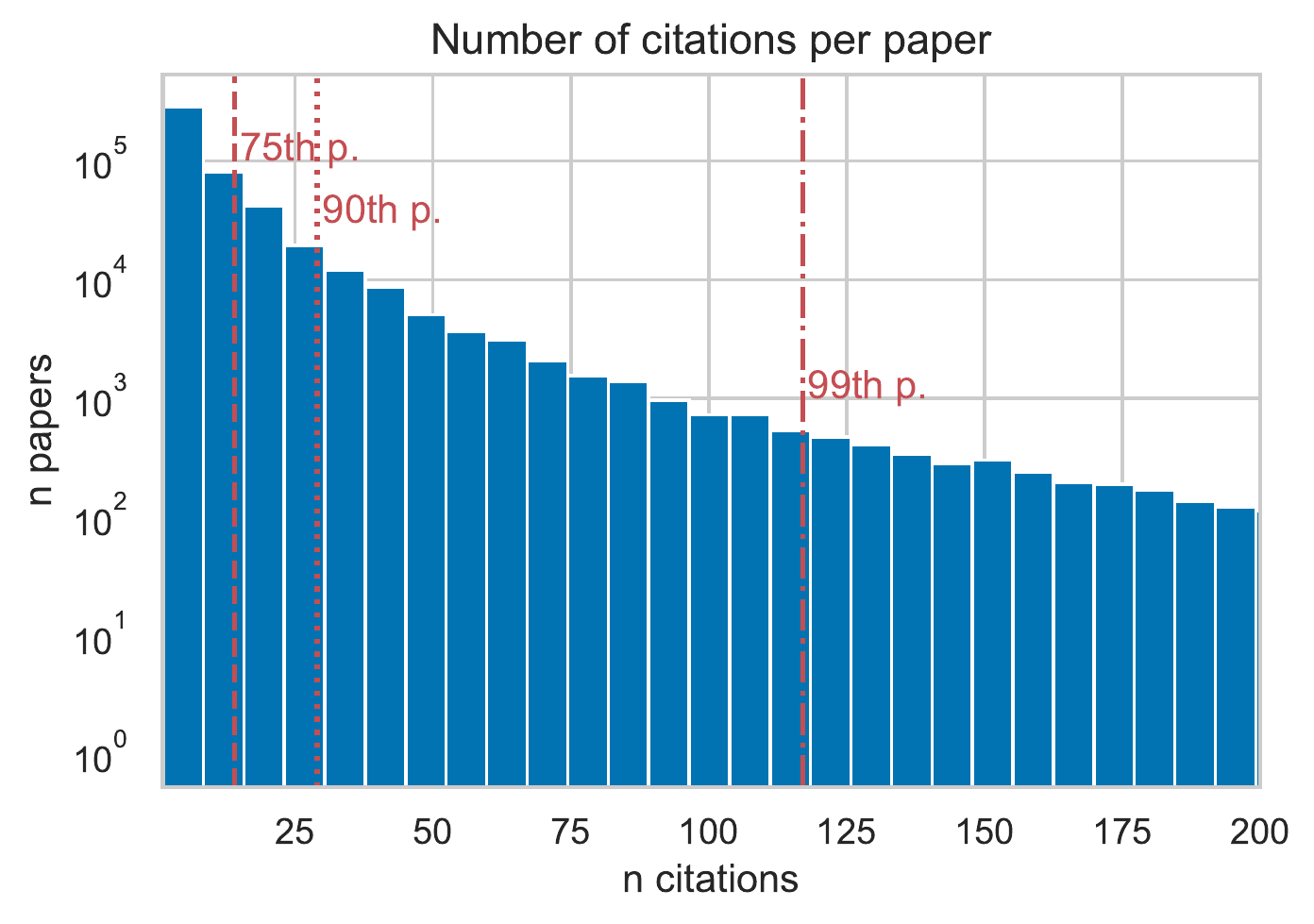}
	\end{subfigure}
	\begin{subfigure}[t]{.49\textwidth}
		\includegraphics[width=\textwidth]{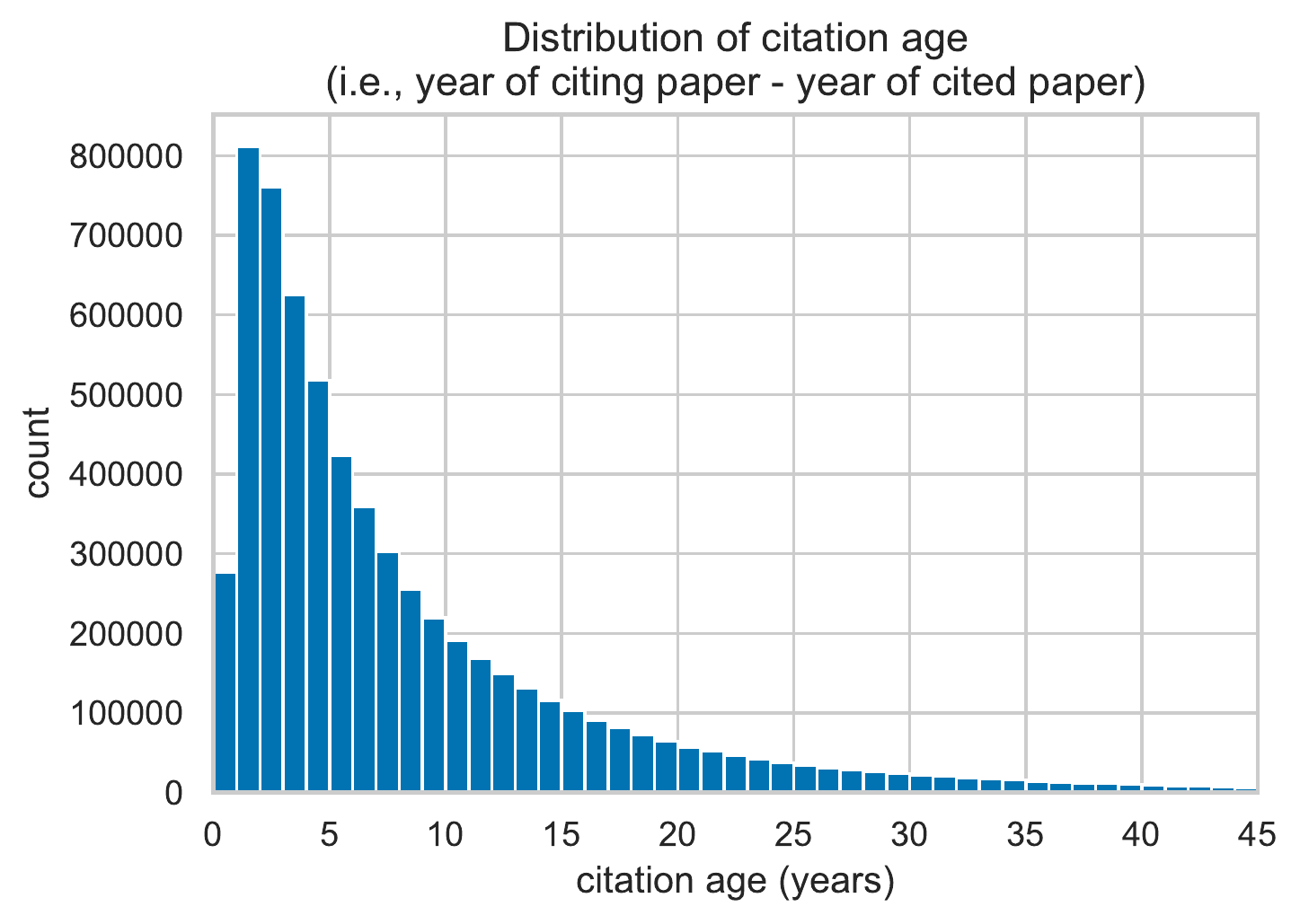}
	\end{subfigure}
	\caption{On the left, the well-known fat tails exhibited by papers' citations. On the right, the distribution of the years occurred from publication of the cited paper to the publication of the citing paper.}
	\label{fig:app2}
\end{figure}

Lastly, in Figure \ref{fig:app3} (left) we plot the increasing trends over the years of the fraction of papers that receive few citations (that is, less than 5 citations). While in Figure \ref{fig:app3} (right) we show that, over time, the number of isolated authors have diminished and also that the number of connected components has remained stable -- or has slightly decreased -- even if the number of authors participating to the profession, that is, number of nodes in the co-authorship network, has increased drastically (see Figure \ref{fig:3}).

\begin{figure}[tb]
    \centering
	\begin{subfigure}[t]{.49\textwidth}
		\includegraphics[width=\textwidth]{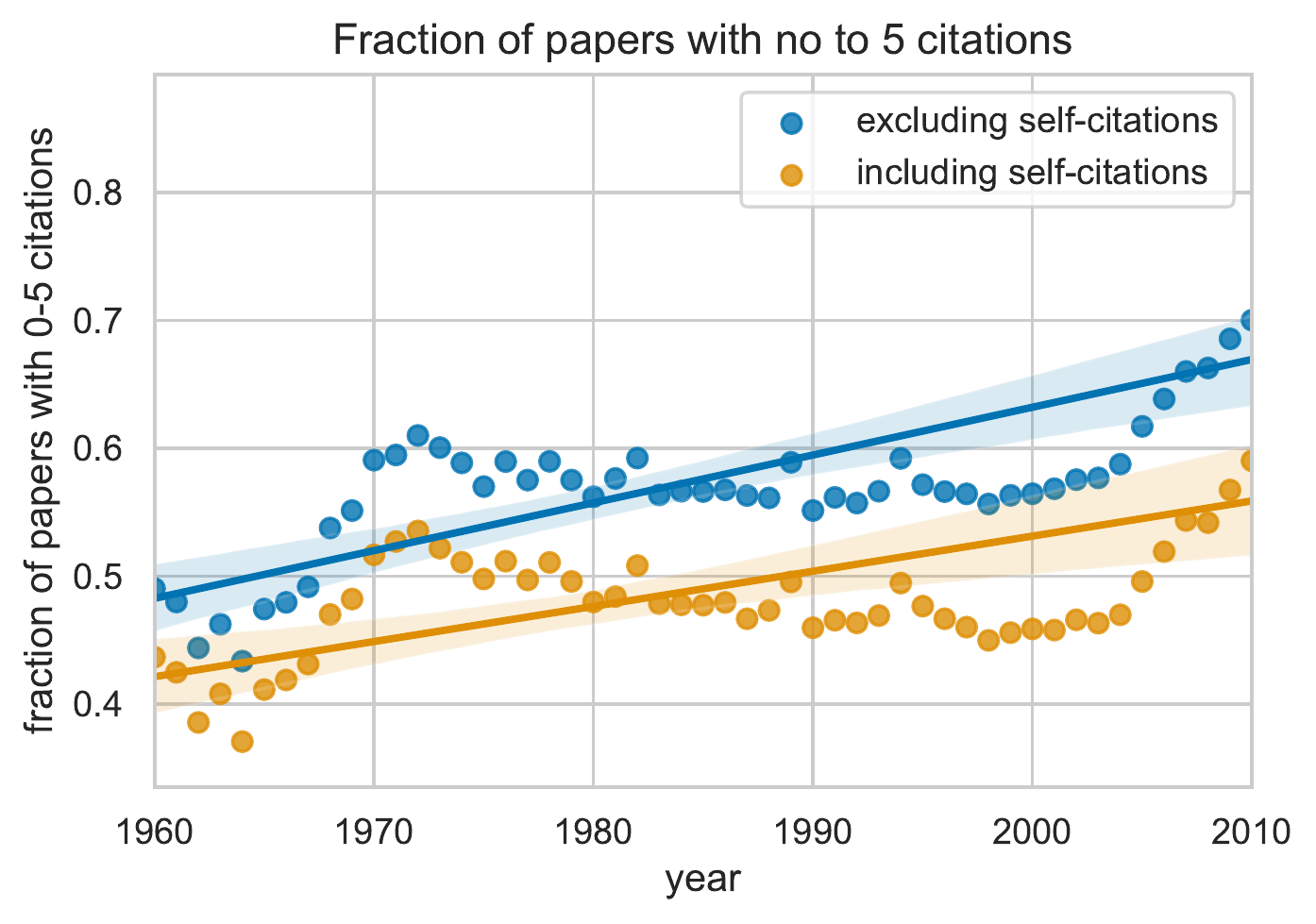}
	\end{subfigure}
	\begin{subfigure}[t]{.49\textwidth}
		\includegraphics[width=\textwidth]{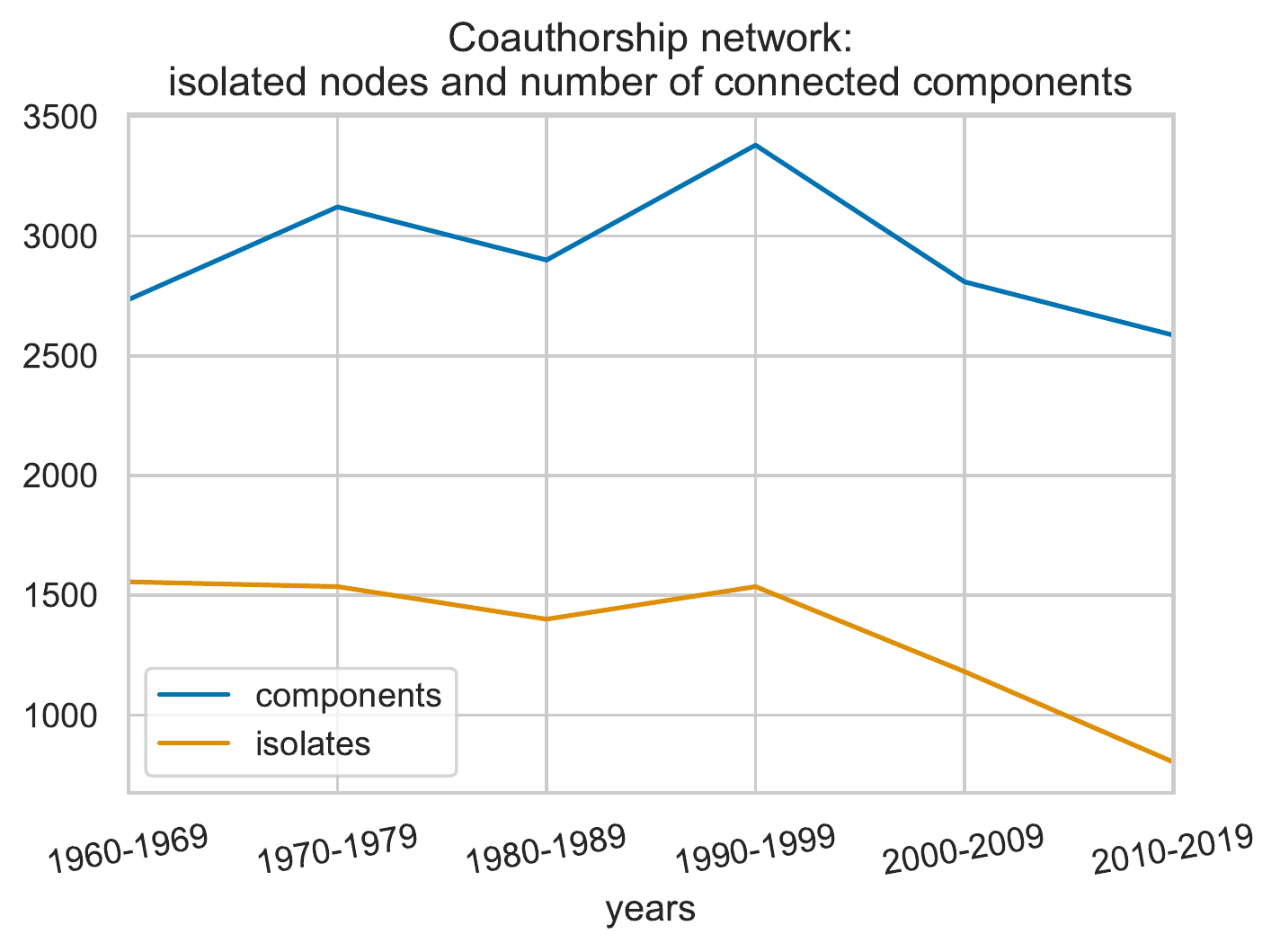}
	\end{subfigure}
	\caption{On the left, the fraction of paper with less than 5 citations. On the right, the number of connected components and of isolated nodes in the coauthorship networks corresponding to each decade.}
	\label{fig:app3}
\end{figure}
\end{new_am}

\clearpage

\section{Robustness of stochastic stability with respect to switching costs}
\label{app:robust}
In this appendix we endow agents with the same degree of rationality and coordinating abilities that we have used in Section \ref{section:cs} for coalitional stability. However, we introduce switching costs to exit from existing projects, showing that if such costs are sufficiently high then coalitional stability coincides with myopic team-wise stability. Moreover, we consider an unperturbed dynamics based on coalitional stability, and we reproduce the results of Proposition \ref{ARMB}. Finally, we show that the introduction of uniform perturbations like in Section \ref{section:ss} yields the same predictions as Proposition \ref{prop:stst}, i.e., stochastically stable states are maximal states.  

The following definition provides an adjusted notion of coalitional stability.

\begin{definition}
A state $x$ is \emph{coalitionally stable} {\bf [CS]} (or \emph{coalition proof}) if there exists no subset $C \subseteq N$ such that
\begin{itemize}
\item[(i)] there exists a set of projects $y \subseteq x$ such that $\forall ~ p  \in y$, $\exists ~ i \in C$ such that $i \in n(p)$;
\item[(ii)] there exists a set of projects $z$ such that $z \cap x = \emptyset$ and, $\forall ~  p  \in z$, if $j \in n(p)$ then $j \in C$;
\item[(iii)] $ (x \backslash y ) \cup z \in X$ and for any agent $i \in C$ we have that $u_i ((x \setminus y) \cup z) - c  \sum_{p \in y} 1_i(n(p)) > u_i (x)$. 
\eproof
\end{itemize}
\end{definition}
The only difference with respect to the definition provided in Section \ref{section:cs} concerns point (iii), in particular the presence of switching costs that affect the advantage of exiting from projects and forming new ones. More precisely, each agent is aware that she will pay a cost equal to $c \geq 0$ for each project she wants to leave. Function $1_i(n(p))$ denotes the indicator function, giving $1$ if agent $i$ belongs to project $p$, and $0$ otherwise. We denote by $CS(c)$ the set of coalitionally stable states when switching costs are equal to $c$.

For ease of presentation of our argument, we define a blocking operation as a quintuple $(x,C,y,z,c)$ such that for every member $i$ of coalition $C$ we have that $u_i ((x \setminus y) \cup z) - c  \sum_{p \in y} 1_i(n(p)) > u_i(x)$. We denote by $BO(c)$ the set containing all blocking operations when switching costs are $c$.

The following proposition shows a kind of continuity of coalitional stability with respect to $c$. In the presence of a tiny amount of switching costs, the predictions given by coalitional stability do not change compared to the case without switching costs. By contrast, when $c$ is large enough, then coalitional stability gives the same predictions as myopic team-wise stability. In the proposition we use $MTS$ and $CS$ as defined in Sections \ref{section:mts} and \ref{section:cs}.

\begin{proposition}\label{prop:swco1}
Take a team formation model satisfying Assumption \ref{ass:v0}. Then:
\begin{itemize}
\item[(i)] if $c$ is low enough, then $CS(c) = CS$;
\item[(ii)] if $c$ is high enough, then $CS(c) = MTS$.
\end{itemize}
\end{proposition}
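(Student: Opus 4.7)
The plan is to exploit the finiteness of the model: the state space $X$, the set $P$ of projects, the set $N$ of agents, and hence the collection of candidate quintuples $(x,C,y,z)$ are all finite. The driving observation is that the left-hand side in condition $(iii)$ of the definition of $CS(c)$ is non-increasing in $c$, so that $BO(c) \subseteq BO(0)$ for every $c \geq 0$; this immediately yields the containment $CS(0) \subseteq CS(c)$ for all $c \geq 0$, which is one half of part (i) and will be used implicitly throughout.

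For part (i), I would establish the reverse inclusion $CS(c) \subseteq CS(0)$ for $c$ small by a continuity argument. Fix any $x \notin CS(0)$ together with a blocking operation $(x,C,y,z) \in BO(0)$ at $c=0$. The strict inequalities in condition $(iii)$ give a positive slack
\[
\delta_x \;=\; \min_{i \in C}\bigl(u_i((x\setminus y)\cup z) - u_i(x)\bigr) \;>\; 0,
\]
so choosing $c < \delta_x/|P|$ guarantees $c\sum_{p \in y} 1_i(n(p)) \leq c|P| < \delta_x$ for every $i \in C$, hence $(x,C,y,z) \in BO(c)$ and $x \notin CS(c)$. Since the set of $x \notin CS(0)$ is finite, the minimum of these thresholds gives a uniform $c^\ast > 0$ below which $CS(c) = CS(0) = CS$.

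For part (ii), I would first argue that $CS(c) \subseteq MTS$ holds for every $c \geq 0$. Indeed, if $x \notin MTS$, then by Lemma \ref{lem:myopic=maximal} $x$ is not maximal, so some project $p$ with $x \cup \{p\} \in X$ exists; the quintuple with $C = n(p)$, $y = \emptyset$, $z = \{p\}$ has vanishing switching cost and, by Assumption \ref{ass:v0}, blocks $x$ for every $c$. It remains to show $MTS \subseteq CS(c)$ for large $c$. Assume $x \in MTS = \mathcal{M}$ and suppose a blocking operation $(x,C,y,z) \in BO(c)$ exists. The case $y = z = \emptyset$ is ruled out because condition $(iii)$ degenerates to $u_i(x) > u_i(x)$; the case $y = \emptyset$, $z \neq \emptyset$ is ruled out by maximality of $x$, which forces $x\cup z \notin X$. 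Hence $y \neq \emptyset$, and condition $(i)$ then yields some $i^\ast \in C$ with $i^\ast \in n(p)$ for at least one $p \in y$, so $\sum_{p \in y} 1_{i^\ast}(n(p)) \geq 1$. Condition $(iii)$ applied to $i^\ast$ gives
\[
u_{i^\ast}((x\setminus y)\cup z) - u_{i^\ast}(x) \;>\; c,
\]
which cannot hold once $c$ exceeds $U^{\max} \equiv \max_{x',x'' \in X,\, i \in N}|u_i(x') - u_i(x'')|$; this maximum is finite because $X$ and $N$ are finite. Any $c > U^{\max}$ therefore gives $MTS \subseteq CS(c)$, and together with the always-valid reverse containment we conclude $CS(c) = MTS$.

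The main subtlety is the handling of the $y=\emptyset$ branch in part (ii): Assumption \ref{ass:v0} alone is not enough to force $y \neq \emptyset$ in a candidate blocking operation, and it is precisely the maximality of $x \in MTS$ (via Lemma \ref{lem:myopic=maximal}) that rules out the costless deviations and thereby lets switching costs exercise their bite. Everything else is a straightforward consequence of finiteness combined with the monotonicity $BO(c) \subseteq BO(0)$.
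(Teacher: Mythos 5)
Your proof is correct and follows essentially the same route as the paper's: monotonicity of blocking operations in $c$ plus finiteness of the model for part (i), and for part (ii) the observation that $CS(c)\subseteq MTS$ costlessly via maximality while any blocking operation from a maximal state must have $y\neq\emptyset$, forcing some coalition member to pay at least $c$. Your thresholds ($\delta_x/|P|$ for part (i) and the global bound $U^{\max}$ for part (ii)) are in fact slightly more conservative --- and more airtight --- than the paper's $\underline{c}$ and $\overline{c}$, which as written do not account for an agent leaving several projects in $y$, nor for the agent forced to pay the switching cost being different from the one attaining the minimum in the definition of $\overline{c}$.
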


\begin{proof} 
Preliminarily, we provide the following thresholds for switching costs:
\begin{eqnarray*}
\underline{c} &=& \min_{(x,C,y,z,0) \in BO(0)} \min_{i \in C} u_i ((x \setminus y) \cup z) - u_i(x)\\
\overline{c} &=& \max_{(x,C,y,z,0) \in BO(0)} \min_{i \in C} u_i ((x \setminus y) \cup z) - u_i(x)
\end{eqnarray*}
We observe that $\underline{c}$ and $\overline{c}$ are well-defined, since a maximum and a minimum always exist in finite sets, and $N$, $X$ and $P$ are all finite.

To prove point (i), we show the double implication. We start from $CS \subseteq CS(c)$ for every $c \geq 0$. Take a state $x \in CS$. We know that no blocking coalition exists in the absence of switching costs, and no blocking coalition can arise when switching costs are added, since the advantage for a coalition to exit from projects and form new ones cannot increase. Hence, $x \in CS(c)$ for every $c \geq 0$. 

We now show that $CS \subseteq CS(c)$ if $c$ is sufficiently low. We fix $c < \underline{c}$ and we consider a state $x \in CS(c)$. For such a state no blocking operation exists when switching costs are $c$. Suppose ad absurdum that a blocking operation exists in the absence of switching costs, i.e, there exists $(x,C,y,z,0) \in BO(0)$. This means that there must exist $i \in C$ such that $u_i ((x \setminus y) \cup z) > u_i(x)$ and $u_i ((x \setminus y) c \sum_{p \in y} 1_i(n(p)) \cup z) > u_i(x)$, but this is in contradiction with $c < \underline{c}$. Hence, $x \in CS$ if $c < \underline{c}$.

To prove point (ii), we exploit Lemma \ref{lem:myopic=maximal}, and we show the double implication between $CS(c)$ and $\mathcal{M}$. Clearly, $CS(c) \subseteq \mathcal{M}$ for every $c \geq 0$. By contraposition, if $x \notin \mathcal{M}$, then there exists a project $p \in P$ such that $p \notin {x}$ and $x \cup \{p\} \in X$. Due to Assumption \ref{ass:v0}, $(x,n(p),\emptyset,\{p\},0) \in BO(0)$, and hence $x \notin CS$. 

Finally, we show that $\mathcal{M} \subseteq CS(c)$ if $c$ is sufficiently high. We fix $c \geq \overline{c}$. Then, no blocking operation exists such that $y \neq \emptyset$. By contraposition, if $x \notin CS(c)$, then there exists a blocking operation $(x,C,\emptyset,z,0)$. This means that $x \cup z \in X$, and hence $x \notin \mathcal{M}$.
\end{proof}

As a corollary of Proposition \ref{prop:swco1}, we observe that the set of coalitionally stable states is always non-empty if switching costs are sufficiently high.

We now introduce an adjusted unperturbed dynamics which is based on coalitional stability, and we refer to it as \emph{coalition-wise dynamics}. Basically, everything is as in the myopic team-wise dynamics used in Section \ref{section:ss}, with the following difference. At each time two sets of projects, $y$ and $z$, are randomly selected from the set of all subsets of $P$.\footnote{The details of this probabilistic selection are not important for the results as long as every pair $y$, $z$ is chosen with positive probability.} The state of the system changes from the current state $x$ to $(x \setminus y) \cup z$ if there exists a coalition $C$ that has the power to destroy all projects in $y$ and to form all projects in $z$, and has advantage in doing so, i.e., each of its members strictly increases her utility if she behaves in such a manner. Clearly, this coalition is exactly of the type which is excluded by the definition of coalitional stability.

In the coalition-wise dynamics, we denote by $\mathcal{R}(c)$ and $\mathcal{A}(c)$, respectively, the set of recurrent states and the set of absorbing states when the level of switching costs is $c$. A characterization analogous to that in Proposition \ref{ARMB} can be provided for the current setup.

\begin{proposition}\label{prop:swco2}
Take a team formation model satisfying Assumption \ref{ass:v0} and a coalition-wise dynamics. If $c$ is high enough, then $\mathcal{A}(c) = \mathcal{R}(c) = \mathcal{M} = MTS$.
\end{proposition}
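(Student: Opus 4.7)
My approach is to chain three containment relations, anchored at the equality $\mathcal{M} = MTS$ supplied by Lemma \ref{lem:myopic=maximal}, and using as the dynamic lever the observation -- made explicit in the proof of Proposition \ref{prop:swco1}(ii) -- that for $c \geq \overline{c}$ (where $\overline{c}$ is the maximum over $BO(0)$ of the minimal per-member utility gain) no blocking operation can have $y \neq \emptyset$: destroying any existing project would cost at least $c$, which overwhelms the utility gain achievable by \emph{any} coalition at \emph{any} state. I would fix such a $c$ throughout the argument; the coalition-wise dynamics then selects, at each step, only blocking operations with $y = \emptyset$, i.e., only additions of feasible projects.

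The first step is to show $\mathcal{M} \subseteq \mathcal{A}(c)$. If $x \in \mathcal{M}$, maximality rules out any blocking operation with $y = \emptyset$ (there is no $p$ with $x \cup \{p\} \in X$); combined with the uniform unavailability of operations with $y \neq \emptyset$, no transition leaves $x$, so $x$ is absorbing.

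The second step is to show $\mathcal{R}(c) \subseteq \mathcal{M}$ by contraposition. If $x \notin \mathcal{M}$, pick a project $p$ with $x \cup \{p\} \in X$; by Assumption \ref{ass:v0}, every member of $n(p)$ strictly gains in moving from $x$ to $x \cup \{p\}$, and since $y = \emptyset$ no switching cost is paid, so $(x, n(p), \emptyset, \{p\}, c)$ is a legitimate blocking operation, selected with positive probability by the coalition-wise dynamics. Once the system sits at $x \cup \{p\}$, the fact that only $y = \emptyset$ operations are ever available implies that along every trajectory the set of active projects is non-decreasing, so the original $x$ can never be revisited; hence $x$ is transient. Combining the trivial $\mathcal{A}(c) \subseteq \mathcal{R}(c)$ with the two inclusions above yields $\mathcal{M} = \mathcal{A}(c) = \mathcal{R}(c)$, and Lemma \ref{lem:myopic=maximal} converts the first set into $MTS$.

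The main obstacle I anticipate is in the second step, namely ensuring that the high-$c$ regime rules out destructive blocking operations \emph{uniformly} across the whole state space, not merely at the $x$ from which we start. This is exactly what the choice of $\overline{c}$ in the proof of Proposition \ref{prop:swco1}(ii) guarantees: the threshold is the maximum over all blocking operations in $BO(0)$, so monotonicity of the set of active projects along trajectories becomes a global structural property of the chain rather than a local one, and the transience argument goes through for every non-maximal state.
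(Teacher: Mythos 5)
Your argument is essentially the paper's own proof: both rest on the observation that for $c \geq \overline{c}$ no blocking operation with $y \neq \emptyset$ exists, so the coalition-wise dynamics can only add feasible projects, which makes every non-maximal state transient and every maximal state absorbing, with $\mathcal{M} = MTS$ supplied by Lemma \ref{lem:myopic=maximal}. Your version is slightly more explicit about the uniformity of the high-$c$ threshold across the state space and about monotonicity of the project set along trajectories, but the decomposition and the key lever are the same.
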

 
\begin{proof}
We fix $c \geq \overline{c}$. We already know by definition that $\mathcal{A}(c) \subseteq \mathcal{R}(c)$. 

Suppose now that $x \notin \mathcal{M}$. Then, there exists a project $p \in P$ such that $x \cup \{p\} \in X$. Such a project can be selected in the coalition-wise dynamics and, given Assumption \ref{ass:v0}, it will be formed by agents belonging to $n(p)$. Since no project will ever cease to exist, due to $c \geq \overline{c}$, we can conclude that $x \notin \mathcal{R}(c)$. Hence, $\mathcal{R}(c) \subseteq \mathcal{M}$. 

Moreover, $\mathcal{M} \subseteq \mathcal{A}$, since by starting from a state $x \in \mathcal{M}$ no new project can be formed and existing projects are too costly to be destroyed. 

Finally, we observe that $\mathcal{M} = MTS$ by Lemma \ref{lem:myopic=maximal}.
\end{proof}

To provide results on stochastic stability, we have to introduce a perturbed dynamics. In particular, we can adopt each of the two perturbation schemes of Section \ref{section:ss}. More precisely, in the uniform perturbation scheme, at every time $s$ each project $p \in P$ is hit by an error with an i.i.d.~probability $\epsilon$: if $p$ is an existing project then it disappears, while if $p$ is a non-existing project than it is formed unless $x^s \cup \{p\} \notin X$. In the uniform destructive perturbation scheme, on the other hand, only existing projects can be hit by perturbations. In the coalition-wise dynamics with uniform (destructive) perturbation scheme, we denote by $SS(c)$ the set of stochastically stable states.

\begin{proposition}\label{prop:swco3}
Take a team formation model satisfying Assumption \ref{ass:v0}, a coalition-wise dynamics and either a uniform destructive perturbation scheme or a uniform perturbation scheme. If $c$ is high enough, then $SS(c) = \mathcal{L}$.
\end{proposition}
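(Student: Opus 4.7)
The plan is to lift the argument of Proposition \ref{prop:stst} to the coalition-wise dynamics by choosing $c$ large enough so that destruction of projects in the unperturbed transitions never occurs, and then to reproduce verbatim the tree-surgery computation used for the myopic team-wise dynamics.

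First, I would fix $c \geq \overline{c}$ (with $\overline{c}$ as defined in the proof of Proposition \ref{prop:swco1}), so that Proposition \ref{prop:swco2} applies and $\mathcal{A}(c) = \mathcal{R}(c) = \mathcal{M}$. For such $c$, I claim that every unperturbed transition out of any state $x$ can only add projects, never remove them. Indeed, any blocking operation $(x,C,y,z,c)$ selected by the coalition-wise dynamics with $y \neq \emptyset$ would require at least one agent $i \in C$ to satisfy $u_i((x\setminus y)\cup z) - c\sum_{p\in y} 1_i(n(p)) > u_i(x)$, which fails once $c \geq \overline{c}$ because the left-hand side is upper-bounded by the maximal advantage in the costless case minus $c$. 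Conversely, by Assumption \ref{ass:v0}, whenever a single project $p$ with $x \cup \{p\} \in X$ is drawn (with $y=\emptyset$, $z=\{p\}$, $C=n(p)$), all members of $n(p)$ strictly gain, so the addition actually occurs in the coalition-wise dynamics.

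Second, I would compute the resistances between absorbing states. For any $x, x' \in \mathcal{M}$, under either perturbation scheme each project in $x \setminus x'$ can disappear only through an error (one perturbation per project), while projects in $x' \setminus x$ can be installed for free through the unperturbed dynamics by the previous paragraph. The cheapest trajectory from $x$ to $x'$ therefore has resistance exactly
\[
r^*(x,x') = \ell(x) - \ell(x \cap x'),
\]
which is identical to equation~\eqref{resist} in the proof of Proposition \ref{prop:stst}. Note also that, because projects in $x' \setminus x$ need not be added only when the system is on the direct path (they can be installed incrementally in the unperturbed dynamics along the way), the formula is tight.

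Third, given that the resistances have precisely the form used in Proposition \ref{prop:stst}, the tree-reversal argument of that proof carries over without modification: for any $x \notin \mathcal{L}$ and any $x'$ with $\ell(x') > \ell(x)$, reversing the unique $x'$-to-$x$ path inside an $x$-tree produces an $x'$-tree whose resistance is lower by exactly $\ell(x') - \ell(x) > 0$, showing $x \notin SS(c)$; and the same reversal, applied between two states in $\mathcal{L}$, yields equal stochastic potentials, so $\mathcal{L} \subseteq SS(c)$. Hence $SS(c) = \mathcal{L}$.

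The main obstacle, and the only step that genuinely differs from Proposition \ref{prop:stst}, is the first one: showing that with large switching costs the coalition-wise unperturbed transitions are monotone in set inclusion, so that resistance is still governed only by the number of projects that have to be dismantled. Once monotonicity is established, the combinatorics of trees, and in particular the invariance of tree resistance under path reversal, works identically in both perturbation schemes, because perturbations of the formation type (present only in the uniform scheme) contribute zero extra resistance to paths that only add projects.
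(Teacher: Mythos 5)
Your proposal is correct and follows essentially the same route as the paper's proof: fix $c \geq \overline{c}$ so that Proposition \ref{prop:swco2} gives the same absorbing states $\mathcal{M}$ as in Section \ref{section:ss}, argue that high switching costs make every project destruction cost one perturbation while additions occur for free by Assumption \ref{ass:v0}, recover $r^*(x,x') = \ell(x) - \ell(x \cap x')$, and then invoke the tree-reversal argument of Proposition \ref{prop:stst}. The only difference is that you spell out the monotonicity of the unperturbed coalition-wise transitions in more detail than the paper, which simply asserts that no agent will ever exit an existing project when $c$ is high enough.
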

 
\begin{proof}
We fix $c \geq \overline{c}$. By Propositions \ref{lem:myopic=maximal} and \ref{prop:swco2}, we know that the set of absorbing states is the same as in the model of Section \ref{section:ss}. 

We now consider resistances. Since switching costs are so high that no agent will ever exit from an existing project, we have that a perturbation is required for every project to be destroyed. New projects will instead be formed in the unperturbed dynamics, since they are certainly advantageous due to Assumption \ref{ass:v0}. Therefore, $r^*(x, x') = \ell(x) - \ell(x \cap x')$. Hence, even the resistances between absorbing states are the same as in the model of Section \ref{section:ss}.

Having the same set of absorbing states and the same resistances between them, we can invoke Proposition \ref{prop:stst} to conclude that $SS(c) = \mathcal{L}$. 
\end{proof}

\section{Farsightedly stable sets of states}
\label{App:FSS}
The set of coalitionally stable states, as defined in Section \ref{section:cs}, may be empty.
In this appendix we apply results from  \citet{HMV09,HMV10} to show that it is possible, on the other hand, to define a wider set of \emph{farsightedly stable states}, whose existence is always granted.
The two papers are applied respectively to network formation games and cooperative games, and as our setup generalizes both, the two papers also offer the possibility of finding examples of the non-existence of CS states for our model.

First of all we need to define \emph{improving paths}, with a simple rephrasing of Definition \ref{def:cs}.

\begin{definition}[Improving paths]
\label{def:improving_path}
Given a state $x$ there is an improving path from $x$ to another state $x'$ if there exists a subset $C \subseteq N$ such that
\begin{itemize}
\item[(i)] there exists a set of projects $y \subseteq x$ such that $\forall ~ p  \in y$, $\exists ~ i \in C$ such that $i \in n(p)$;
\item[(ii)] there exists a set of projects $z$ such that $z \cap x = \emptyset$ and, $\forall ~  p  \in z$, if $j \in n(p)$ then $j \in C$;
\item[(iii)] $ x'= (x \backslash y ) \cup z \in X$ and for any agent $i \in C$ we have that $u_i ( x' ) > u_i (x)$. 
\eproof
\end{itemize}
\end{definition}

Definition \ref{def:improving_path} generalizes \emph{farsighted improving paths} from Definition 3 in \citet{HMV09} and Definition 1 in \citet{HMV10}. 
For a state $x$ we can define as $F(x)$ the set of all the states $x'$ that can be reached from $x$ along improving paths, and as $C_{x \rightarrow x'}$ the coalition that is profitably moving from state $x$ to state $x'$.
It is clear that a state $x$ is CS if and only if $F(x) = \emptyset$.

Now, the improving path promoted by a coalition is moving on the partially ordered set $X$ of all possible states to a new state.
What if another coalition were to start a new path from this new state?
This could harm the members of the original coalition, who would not have been \emph{farsighted} enough.
To increase the rationality of the agents the following definition is used.

\begin{definition}
[Farsightedly stable sets of states]
\label{def:consistent_set}
A set of states $S \subseteq X$ is \emph{farsightedly stable} if:
\begin{itemize}
\item[(i)] for any $x \in S$, $x' \not \in S$, such that $x' \in F(x)$, there is an $x'' \in F(x')$ such that there is an agent $i \in C_{x \rightarrow x'}$, for which $u_i (x'') < u_i(x)$;
\item[(ii)] for any $x' \in X \backslash S$, $F(x') \cap S \neq \emptyset$;
\item[(iii)] there is no $S' \subsetneq S$ such that $S'$ satisfies conditions $(i)$ and $(ii)$ above.
\end{itemize}
\end{definition}

Definition \ref{def:consistent_set} generalizes \emph{farsightedly stable sets} from Definition 4 in \citet{HMV09} and Definition 4 in \citet{HMV10}.
Condition $(i)$ says that if there is an improving path from a state $x$ belonging to $S$, to another state $x'$ outside it, then this is due to the fact that the improving path in question could possibly harm one of the members of the coalition in $C_{x \rightarrow x'}$, because of a new improving path, promoted by another coalition, from $x'$.
Condition $(ii)$ says that from every state outside $X$ there is an improving path into $X$. 
Finally, since the previous two conditions are trivially satisfied by $X$ itself, condition $(iii)$ says that a set of states $S$ is farsightedly stable if no proper subset of states of $S$ is also farsightedly stable.
A farsightedly stable set $S$ of states always exists, and this is proven by the following proposition, which generalizes Theorem 2 in \citet{HMV09} and Proposition 1 in \citet{HMV10}.

\begin{proposition}[Existence of a farsightedly stable sets of states]
For every team formation model characterized by a quintuple $(N,\vec{w},P,X,\vec{u})$, there exists a farsightedly stable sets of states $S \subseteq X$.
\end{proposition}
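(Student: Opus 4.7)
The plan is to mimic the existence argument of \cite{HMV09,HMV10}, adapted to our setting. The strategy is to show that the collection of subsets of $X$ satisfying conditions (i) and (ii) of Definition \ref{def:consistent_set} is nonempty and finite, so that it admits a minimal element (with respect to strict set inclusion), which then automatically satisfies (iii).

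First, I would observe that $X$ itself is finite: since $N$, $A$ and each $w_i$ are finite, the set $P$ of feasible projects is finite, and hence $X \subseteq \mathcal{P}(P)$ is finite as well. Next, I would define
\[
\mathcal{F} = \{ S \subseteq X : S \text{ satisfies (i) and (ii) of Definition \ref{def:consistent_set}} \},
\]
and show that $X \in \mathcal{F}$. Indeed, if $S = X$, then condition (i) holds vacuously, because there is no $x' \notin S$; similarly condition (ii) holds vacuously, because $X \setminus S = \emptyset$. Thus $\mathcal{F} \neq \emptyset$.

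Since $\mathcal{F}$ is a nonempty finite collection partially ordered by set inclusion, it admits a minimal element $S^\ast$. By construction $S^\ast$ satisfies (i) and (ii). Moreover, any $S' \subsetneq S^\ast$ with $S' \in \mathcal{F}$ would contradict minimality, so no such $S'$ exists; that is, $S^\ast$ satisfies (iii) as well. Therefore $S^\ast$ is a farsightedly stable set of states, proving the proposition.

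The argument is essentially the one used by \citet{HMV09,HMV10}, and the only substantive check is that $X$ itself lies in $\mathcal{F}$, which is the part I would flag as the main delicate step: verifying that conditions (i) and (ii) of Definition \ref{def:consistent_set}, which reference deviations triggered by elements of $X \setminus S$, become vacuous when $S = X$. Once this observation is made, finiteness of $X$ and the standard descent through the inclusion order deliver the existence of a minimal such set, which is the desired farsightedly stable set.
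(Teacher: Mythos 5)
Your proof is correct and follows essentially the same route as the paper: both arguments rest on the observation that $X$ itself (vacuously) satisfies conditions (i) and (ii) of Definition \ref{def:consistent_set} and then use finiteness of $X$ to extract a set that additionally satisfies (iii). The only cosmetic difference is that you select a minimal element of the family of sets satisfying (i) and (ii) directly, whereas the paper phrases the same finiteness argument as an impossible infinite descent of cardinalities.
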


\begin{proof}
We proceed by contradiction.
The set $X$ of states, whose cardinality is given by a finite natural number $\omega_0$, satisfies conditions $(i)$ and $(ii)$ in Definition \ref{def:consistent_set}.
Therefore, $X$ may not be  farsightedly stable because it does not satisfy condition $(iii)$, i.e., there is a set $X_1 \subsetneq X$, whose cardinality is given by some finite natural number $\omega_1 < \omega_0$, which also satisfies those two conditions.
If $X_1$ does also not satisfy condition $(iii)$, we can iterate the reasoning.
If we never find a set that satisfies all conditions from Definition \ref{def:consistent_set}, it means that we have found an infinite series $\{\omega_i\}_{i \in \mathbb{N}}$ such that $\omega_{i+1} < \omega_i$ for every $i \in \mathbb{N}$.
Such an infinite series is impossible, and this proves the statement.
\end{proof}

\end{document}